\title{The Power of Proofs: New Algorithms for Timed Automata Model Checking (with Appendix) \thanks{Research supported by NSF grant CCF-0926194. This paper is the preprint of the FORMATS 2014 version of the paper, and this preprint is the supplement that contains the full Appendix. The final version \cite{fontana_power_2014} is published by Springer, and is available at http://link.springer.com/chapter/10.1007/978-3-319-10512-3\_9 DOI: 10.1007/978-3-319-10512-3\_9.}}
\author{Peter Fontana and Rance Cleaveland}
\institute{University of Maryland, College Park, Department of Computer Science}
\date{\today}
\begin{document}

\maketitle

\begin{abstract}
This paper presents the first model-checking algorithm for an expressive modal mu-calculus over timed automata, $L^{\mathit{rel}, \mathit{af}}_{\nu,\mu}$, and reports performance results for an implementation. This mu-calculus contains extended time-modality operators and can express all of TCTL. Our algorithmic approach uses an ``on-the-fly" strategy based on proof search as a means of ensuring high performance for both positive and negative answers to model-checking questions.  In particular, a set of proof rules for solving model-checking problems are given and proved sound and complete; our algorithm then model-checks a property by constructing a proof (or showing none exists) using these rules.  One noteworthy aspect of our technique is that we show that verification performance can be improved with \emph{derived rules}, whose correctness can be inferred from the more primitive rules on which they are based. In this paper, we give the basic proof rules underlying our method, describe derived proof rules to improve performance, and we compare our implementation to UPPAAL.
\end{abstract}

\section{Introduction}

Timed automata are used to model real-time systems in which time is continuous and timing constraints may refer to elapsed time between system events \cite{alur-a-theory-1994}.  The timed automata model provides a balance between expressiveness and tractability: a variety of different real-time systems can be captured in the formalism, and various properties, including safety (reachability) and liveness, can also be decided automatically for a given automaton \cite{aceto-is-your-2002,alur-timed-automata-1999,alur-model-checking-in-1993}. 

To specify these properties, different logics have been devised.  One popular logic, Timed Computation Tree Logic (TCTL) \cite{alur-model-checking-in-1993}, extends the untimed Computation Tree Logic (CTL) \cite{clarke-automatic-verification-1986}  by adding time constraints to the modal operators.  Other researchers explored timed extensions to the modal mu-calculus \cite{emerson-efficient-model-1986}. One such extension, called $T_{\mu}$ \cite{henzinger-symbolic-model-1994} extends the untimed modal mu-calculus with a single-step operator. Another extension, which we refer to as $L_{\nu,\mu}$~\cite{sokolsky-local-model-1995,zhang-fast-generic-2005,zhang-fast-on-the-fly-2005}, extends the modal mu-calculus with separate time and action modal operators. This logic is sufficient for expressing some basic safety and liveness properties.  However, it cannot express all of TCTL \cite{fontana-expressiveness-results-2014}.  To address this, $L_{\nu,\mu}$ was extended with \emph{relativization operators} by \cite{bouyer-timed-modal-2011}; we denote this logic as $L^{rel}_{\nu,\mu}$. These additional operators make the logic expressive enough to express all of TCTL \cite{fontana-expressiveness-results-2014}. (Bouyer et al.\ \cite{bouyer-timed-modal-2011} included only greatest fixpoints, yielding $L_{\nu}$, which they referred to as $L_{c}$; the least fixpoints in $L^{rel}_{\nu,\mu}$ not in $L^{rel}_{\nu}$ add expressive power \cite{fontana-expressiveness-results-2014}.) 

Over the model of timed automata, the model checking problem for  $L_{\nu,\mu}$ is EXPTIME-complete \cite{aceto-is-your-2002}. Bouyer et al.\ \cite{bouyer-timed-modal-2011} show that formulas using the relativization operators can be model-checked in EXPTIME. Hence, model checking  $L^{rel}_{\nu,\mu}$ over timed automata is EXPTIME-complete. The same model-checking problem for TCTL over timed automata is PSPACE-complete \cite{alur-model-checking-in-1993}. 

While timed logics were being studied, tools and implementation algorithms were developed as well. Much of the development focused on handling subsets of properties specified in TCTL. A widely-used tool, UPPAAL \cite{behrmann-a-tutorial-2004}, supports a fragment of TCTL, which includes many safety and liveness properties; other tools, including KRONOS \cite{yovine-kronos:-a-1997}, Synthia \cite{peter-synthia:-verification-2011}, and RED/REDLIB \cite{wang-redlib-for-2006}, have also been developed, some of which are able to model-check all of TCTL.  Additionally, some tools were developed for timed modal-mu calculi. Two tools that can model check fragments of a timed mu-calculus include CMC \cite{laroussinie-cmc:-a-1998}, which can handle $L_{\nu}$, and CWB-RT \cite{fontana-data-structure-2011,zhang-fast-generic-2005,zhang-fast-on-the-fly-2005}, which can check safety properties written in $L_{\nu}$.

The contributions of this paper include the first algorithm, and an implementation, to model check $L^{\mathit{rel}, \mathit{af}}_{\nu,\mu}$. By definition, $L^{\mathit{rel}, \mathit{af}}_{\nu,\mu}$ consists of the so-called \emph{alternation-free} formulas of $L^{rel}_{\nu,\mu}$ and is thus a superset of $L^{rel}_{\nu}$. Assuming non-zeno and timelock-free automata,  $L^{\mathit{rel}, \mathit{af}}_{\nu,\mu}$ is strong enough to express all of TCTL \cite{fontana-expressiveness-results-2014}. Our implementation extends the tool CWB-RT \cite{fontana-data-structure-2011,zhang-fast-generic-2005,zhang-fast-on-the-fly-2005}. Implementation details of the model checker are discussed in Section \ref{s:mcalg}; in Section \ref{s:examples}, we give a demonstration of some models and properties that can be model checked by our tool as well as a performance comparison to UPPAAL. 

CWB-RT is a proof-search model checker: it verifies properties by constructing a proof using a set of proof rules. These proof rules decompose the given goal (does the automaton satisfy a formula) into (smaller) subgoals.  These proof search methods were used for the untimed modal mu-calculus in \cite{cleaveland-tableau-based-model-1990}, explored in \cite{sokolsky-local-model-1995}, and extended to the timed setting in \cite{zhang-fast-generic-2005,zhang-fast-on-the-fly-2005} in order to produce a fast on-the-fly model checker that can model check timed automata incrementally. The generated proofs not only give additional correctness information but also can be used as a mechanism to improve model-checking performance. We develop the additional proof rules to check the relativized operators, extending the proof rules used in \cite{zhang-fast-generic-2005,zhang-fast-on-the-fly-2005}. The additional rules are discussed in Section \ref{s:proofrules}.

Furthermore, through select \emph{derived} proof rules, we can enhance performance. These derived rules, together with a judicious use of \emph{memoization}, yield dramatic performance improvements. We discuss the derived proof rules in Section \ref{s:optderiv}.

\section{Background}
\label{s:back}

\subsection{Timed Automata}

This section defines the syntax of timed automata and sketches their semantics. The interested reader is referred to \cite{alur-timed-automata-1999,fontana-a-menagerie-2014} for a fuller account. To begin with, timed automata rely on \emph{clock constraints}.

\begin{definition}[Clock constraint $cc \in \Phi(CX)$] Given a nonempty finite set of clocks $CX = \mset{x_1, x_2, \ldots, x_n}$ and $d \in \Zz^{\geq 0}$ (a non-negative integer), a \emph{clock constraint $cc$} may be constructed using the following grammar:
\begin{equation*}
cc::= x_i < d \ | \ x_i \leq d \ | \ x_i > d \ | \ x_i \geq d \ | \ cc \lgcand cc
\end{equation*}
$\Phi(CX)$ is the set of all possible clock constraints over $CX$. We also use the following abbreviations: true (\lgtrue) for $x_1 \geq 0$, false (\lgfalse) for $x_1 < 0$, and $x_i = d$ for $x_i \leq d \lgcand x_i \geq d$.
\label{def:cxcons}
\end{definition}

\noindent
Timed automata may now be defined as follows.

\begin{definition}[Timed automaton]
A \emph{timed automaton} is a tuple \\ $(L, l_0, \Sigma, CX, I, E)$, where:
\begin{itemize}
\item $L$ is the finite set of \emph{locations}. 
\item $l_0 \in L$ is the \emph{initial location}.
\item $\Sigma$ is the finite set of \emph{action symbols}. 
\item $CX = \mset{x_1, x_2, \ldots, x_n}$ is the nonempty finite set of \emph{clocks}.
\item $\nmfunc{I}{L}{\Phi(CX)}$ maps each location $l$ to a clock constraint, $I(l)$, referred to as the \emph{invariant} of $l$.
\item $E \subseteq L \times \Sigma \times \Phi(CX) \times 2^{CX} \times L$ is the set of \emph{edges}.  In an edge $e = (l, a, cc, \lambda, l')$ from $l$ to $l'$ with action $a$, $cc \in \Phi(CX)$ is the \emph{guard} of $e$, and $\lambda$ represents the set of clocks to \emph{reset} to $0$.
\end{itemize}
\label{def:timedaut}
\end{definition}

\noindent
The semantics of timed automata rely on \emph{clock valuations}, which are functions $\nmfunc{\nu}{CX}{\Rr^{\geq 0}}$ ($\Rr^{\geq 0}$ is the set of non-negative real numbers); intuitively, $\nu(x_i)$ is the current time value of clock $x_i$.  A timed automaton begins execution in its initial location with the initial clock valuation $\nu_0$ assigning $0$ to every clock.  When the automaton is in a given clock location $l$ with current clock valuation $\nu$, two types of transitions can occur: time advances and action executions. During a time advance, the location stays the same and the clock valuation $\nu$ advances $\delta \in \Rr^{\geq 0}$ units to the valuation $\nu + \delta$, where $\nu + \delta$ is defined as $(\nu + \delta)(x_i) = \nu(x_i) + \delta$. For a time advance to be allowed, for all $0 \leq \delta' \leq \delta$, $\nu + \delta'$ must satisfy the invariant of location $l$. Due to convexity of clock constraints, it suffices to ensure that both $\nu$ and $\nu + \delta$ satisfy $I(l)$. An \emph{action execution} of action $a$ can occur when $\nu$ satisfies the guard for an edge leading from $l$ to $l'$, the edge is labeled by action $a$, and , the invariant of $l'$ is satisfied after the clocks are reset as specified in the edge. In this case the location changes to $l'$ and the clocks in $\lambda$ are reset to $0$.  These intuitions can be formalized as a labeled transition system whose states consist of locations paired with clock valuations, each state notated as $(l,\nu)$.  A \emph{timed run} of the automaton is a sequence of transitions starting from the initial location and $\nu_0$. On occasion, we also augment each timed automaton with a set of atomic propositions $AP$ and a labeling function $\nmfunc{M}{L}{\pwmset{AP}}$ where $M(l)$ is the subset of propositions in $AP$ that location $l$ satisfies.

\begin{example}[Train timed automaton] The timed automaton in \mbox{Figure \ref{fig:ta1}} models a train component of the GRC (Generalized Railroad Crossing) protocol \cite{heitmeyer-the-generalized-1994}. There are three locations: 0:\:far (initial), 1:\:near, and 2:\:in; and one clock $x_1$.  $\Sigma$ has the actions $approach$, $in$, and $exit$. Here, location 1:\:near has the invariant $x_1 \leq 4$ while 0:\:far has the vacuous invariant $\lgtrue$.  The edge $($1:\:near$, in, x_1 = 4, \mset{x_1},  $2:\:in$)$ has action $in$, guard $x_1 = 4$, and resets $x_1$ to 0.

\begin{figure}[t]

\centerline{\includegraphics[scale=1.0]{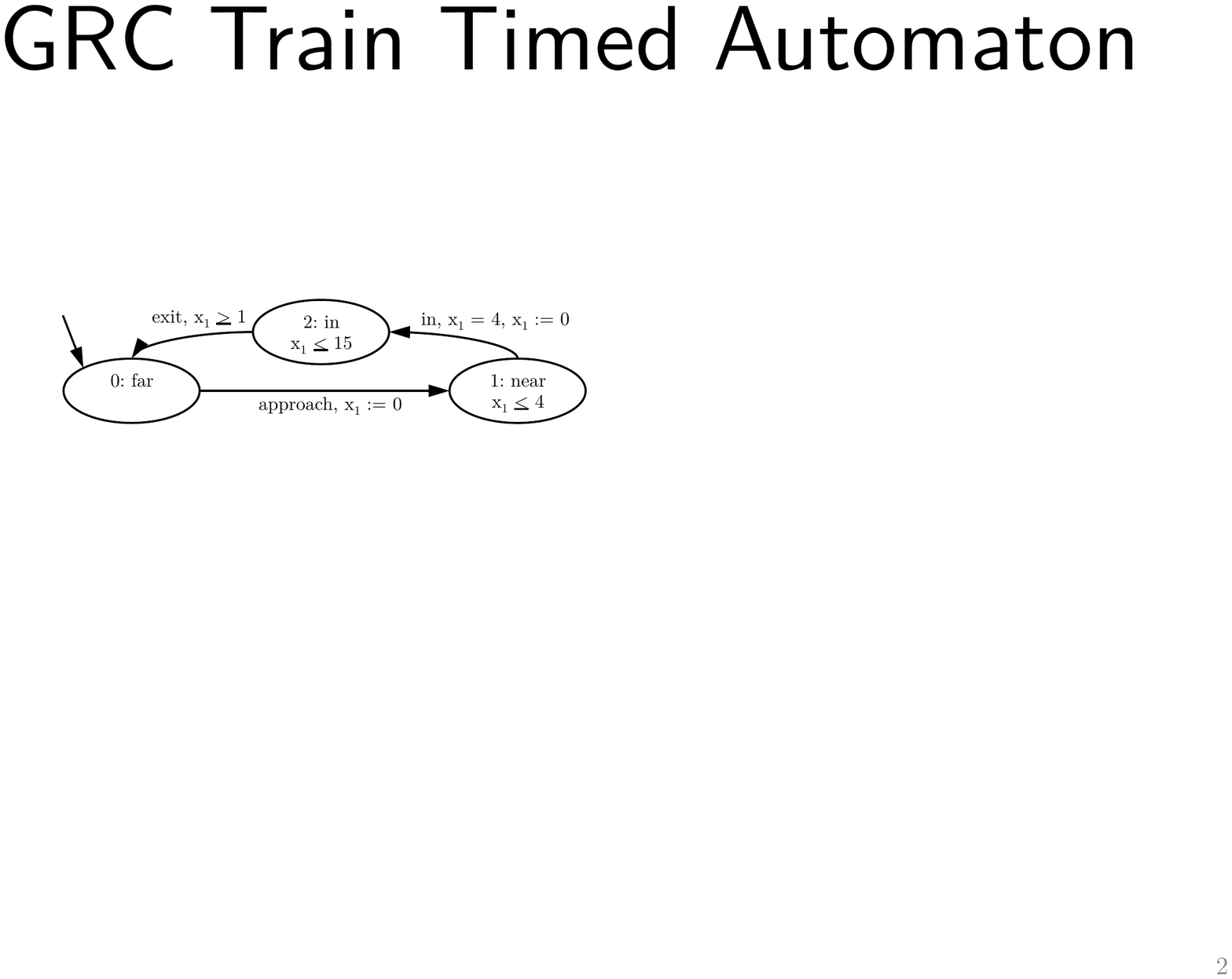}}
\caption{Timed automaton of a train.}
\label{fig:ta1}
\end{figure}

A sample timed run of this timed automaton is: $
(\text{0:\:far}, x_1 = 0) \ttrans{5}  (\text{0:\:far}, x_1 = 5) \ttrans{approach} (\text{1:\:near}, x_1 = 0)
\ttrans{4} (\text{1:\:near}, x_1 = 4) \ttrans{in} (\text{2:\:in}, x_1 = 0) \ttrans{3} (\text{2:\:in}, x_1 = 3)
 \ttrans{2} (\text{2:\:in}, x_1 = 5) \ttrans{exit}  (\text{0:\:far}, x_1 = 5) \ldots $
\label{ex:ta}
\end{example}

\subsection{Timed Logic $L^{rel}_{\nu,\mu}$ and Modal Equation Systems (MES)}

The following definition of $L^{rel}_{\nu,\mu}$ uses the modal-equation system (MES) format used in \cite{cleaveland-a-linear-time-1993} for untimed systems and in \cite{zhang-fast-generic-2005,zhang-fast-on-the-fly-2005} for $L_{\nu,\mu}$. 

\begin{definition}[$L_{\nu,\mu}$, $L^{rel}_{\nu,\mu}$ basic formula syntax]
Let $CX = \mset{x_1, x_2, \ldots}$ and $CX_f = \mset{z,z_1,\ldots}$ be disjoint sets of clocks.  Then
the syntax of a $L_{\nu,\mu}$ \emph{basic formulas} is given by the following grammar:
\begin{align*}
\psi ::= & p \barsep \lgnot{p} \barsep \lgtrue \barsep \lgfalse \barsep cc \barsep Y \barsep \psi \lgcand \psi \barsep \psi \ \lgcor \psi  \barsep \mmEXra{a}{\psi} \\ 
&\barsep \mmAXra{a}{\psi} \barsep \tfoEsh{\psi} \barsep \tfoAsh{\psi} \barsep \tfreeze{z}{\psi} 
\end{align*}
Here, $p \in AP$ is an atomic proposition, $cc \in \Phi(CX)$ is a clock constraint over clock set $CX$, $Y \in Var$ is a propositional variable ($Var$ is the set of propositional variables), and $a \in \Sigma$ is an action. In formula $z.\psi$ $z$ is a clock in $CX_{f}$; the $z.$ operator is often referred to as \emph{freeze quantification}.

The \emph{relativized timed modal-mu calculus $L^{rel}_{\nu,\mu}$ syntax} replaces $\tfoEsh{\psi}$ and $\tfoAsh{\psi}$ with $\tfoEr{\psi_1}{\psi_2}$ and $\tfoAr{\psi_1}{\psi_2}$, where each $\psi_1$ and $\psi_2$ are basic formulas in $L^{rel}_{\nu,\mu}$.
\label{def:lnumusyn}
\end{definition}

What follows is a sketch of the semantics; \cite{bouyer-timed-modal-2011,fontana-expressiveness-results-2014} contains a formal definition. Formulas are interpreted with respect to states (i.e. (location, clock valuation) pairs) of a timed automaton whose clock set is $CX$ and labeling function is $M$, and an environment $\theta$ associating each propositional variable $Y$ with a set of states.  A state $(l,\nu)$ satisfies an atomic proposition $p$ if and only if $p$ is in the set $M(l)$.  A state satisfies $Y$ if and only if $(l,\nu) \in \theta(Y)$.  $\mmEXra{a}{\psi}$ holds in a state if, after executing action $a$, $\psi$ is true of the state after the action transition; $\mmAXra{a}{\psi}$ means after all action transitions involving $a$, $\psi$ holds in the target state; $\tfoEsh{\psi}$ holds of a state if after some time advance of $\delta \geq 0$, $\psi$ holds in the new state, while $\tfoAsh{\psi}$ is satisfied in a state if for all possible time advances of $\delta \geq 0$, $\psi$ is true in the resulting states. Formula $\tfreeze{z}{\psi}$ holds in a state if, after introducing a new clock $z$ ($z$ is not a clock of the timed automaton) and setting it to $0$ without altering other clocks, $\psi$ is true. The formula $\tfoEr{\psi_1}{\psi_2}$ means, ``there exists a time advance where $\psi_2$ is true and $\psi_1$ is true for all times up to, but not including, that advance'', and $\tfoAr{\psi_1}{\psi_2}$ means, ``either $\psi_2$ is true for all time advances or $\psi_1$ releases $\psi_2$ from being true after some time advance.''  

We also introduce two derived operators: $\displaystyle \mmAX{\psi}$ for $\bigwedge_{a \in \Sigma_{TA}}\mmAXra{a}{\psi}$ (for all next actions) and $\displaystyle \mmEX{\psi}$ for  $\bigvee_{a \in \Sigma_{TA}}\mmEXra{a}{\psi} $ (there exists a next action). It may be seen that $\tfoEsh{\psi}$ is equivalent to $\tfoEr{\lgtrue}{\psi}$, and $\tfoAsh{\psi}$ to $\tfoAr{\lgfalse}{\psi}$.

$L^{rel}_{\nu,\mu}$ MESs are mutually recursive systems of equations whose right-hand sides are basic formulas as specified above.  The formal definition follows.

\begin{definition}[$L^{rel}_{\nu,\mu}$ MES syntax] Let $X_1, X_2, \ldots, X_{v}$ be propositional variables, and let $\psi_1, \ldots \psi_{v}$ all be $L^{rel}_{\nu,\mu}$ basic formulae. Then \emph{a $L^{rel}_{\nu,\mu}$ modal equation system (MES)} is an ordered system of equations as follows, where each equation is labeled with a \emph{parity} ($\mu$ for least fixpoint, $\nu$ for greatest fixpoint): $X_1 \overset{\mu/\nu}{=} \psi_1, X_2 \overset{\mu/\nu}{=} \psi_2, \ldots, X_v \overset{\mu/\nu}{=} \psi_v$.

In our MES, we will assume that all variables are \emph{bound} (every variable in the right of the equation appears as some left-hand variable).
\label{def:lnymumessyn}
\end{definition}

The formal definition of the semantics of MESs may be found in~\cite{zhang-fast-generic-2005,zhang-fast-on-the-fly-2005}; we recount the highlights here.  Given a timed automaton and atomic-proposition interpretation $M$, a basic $L^{rel}_{\nu,\mu}$ formula may be seen as a function mapping sets of timed-automaton states (corresponding to the meaning of the propositional variables to the formula) to a single set of states (the states that make the formula true, given the input sets just referred to).  The set of subsets of timed-automaton states ordered by set inclusion form a complete lattice; it turns out that the functions over this lattice definable by basic formulae are monotonic over this lattice, meaning they have unique greatest and least fixpoints.  This fact is the lynch-pin of the formal semantics of MESs.  Specifically, given MES $X_1 \overset{\mu/\nu}{=} \psi_1, \ldots, X_v \overset{\mu/\nu}{=} \psi_v$, we may construct  a function that, given a set of states for $X_1$, returns the set of states satisfying $\psi_1$, where the values for $X_2, \ldots, X_v$ have been computed recursively.  This function is monotonic, and therefore has a unique least and greatest fixpoint.  If the parity for $X_1$ is $\mu$, then the set of states satisfying $X_1$ is the least fixpoint of this function, while if the parity is $\nu$ then the set of states satisfying $X_1$ is the greatest fixpoint.  By convention, the meaning of a MES is the set of states associated with $X_1$, the first left-hand-side in the sequence of equations. However, in the MES, each variable $X_i$ can be interpreted as its own subformula; this interpretation will prove useful constructing proofs that a state satisfies a MES.

Given timed automaton $TA$, atomic-proposition interpretation function $M$, and propositional variable environment $\theta$, we use $\satset{\psi}_{TA,M,\theta}$ to denote the set of states satisfying $\psi$.  For an MES $\mathcal{M}$ of form $X_1 \overset{\mu/\nu}{=} \psi_1 \ldots X_v \overset{\mu/\nu}{=} \psi_v$, we write $\satset{\mathcal{M}}_{TA,M, \theta}$, or equivalently $\satset{X_1}_{TA,M, \theta}$ when there is no confusion, for the set of states satisfying the MES.

To handle the clocks used in freeze quantification ($\tfreeze{z}{\psi}$), we extend the timed automaton's states $(l,\nu)$ to \emph{extended states}   $(l,\nu, \nu_{f})$ using the additional valuation component $\nmfunc{\nu_{f}}{CX_{f}}{\Rr^{\geq{0}}}$. This formalism comes from \cite{bouyer-timed-modal-2011}. When clear from context, we will refer to an extended state as $(l,\nu)$ and omit the explicit notation of $\nu_{f}$.

In this paper we only consider MESs that are \emph{alternation-free}. Intuitively, an MES is alternation free if there is no mutual recursion involving variables of different parities.  For more information on the notion, see \cite{emerson-efficient-model-1986}. We denote the alternation-free fragment of $L^{rel}_{\nu,\mu}$ as $L^{\mathit{rel}, \mathit{af}}_{\nu,\mu}$. By definition, $L^{\mathit{rel}, \mathit{af}}_{\nu,\mu}$ is a superset of $L^{rel}_{\nu}$ because any formula with an alternation must have at least one greatest fixpoint and at least one least fixpoint. The alternation-free restriction is not prohibitive because for any timelock-free nonzeno timed automaton (see \cite{bowman-how-to-2006}), we can express any TCTL formula into a $L^{\mathit{rel}, \mathit{af}}_{\nu,\mu}$ MES \cite{fontana-expressiveness-results-2014}. 

\begin{example}[Specifying properties with MES]
Again consider the timed automaton in Figure \ref{fig:ta1} of Example \ref{ex:ta}. Two $L^{\mathit{rel}, \mathit{af}}_{\nu,\mu}$ specifications we can ask are:
\begin{align}
X_1 \mesnueq& \neg broke \lgcand \tfoAsh{\mmAX{X_1}} \label{eq:a1} \\
X_1 \mesnueq& \neg far \lgcor \Bigl(\tfoAsh{\mmAX{X_1}} \lgcand \tfoEsh{\tfreeze{z}{\tfoAsh{z < 1}}}\Bigr) \label{eq:a2}
\end{align}
Equation \ref{eq:a1} says ``it is always the case the the train is not broken,'' and equation \ref{eq:a2} says ``it is inevitable that a train is not far.''
\label{ex:spec1}
\end{example}

\section{Checking $L^{\mathit{rel}, \mathit{af}}_{\nu,\mu}$ Properties: A Proof-Based Approach}
\label{s:proofrules}

The $L^{\mathit{rel}, \mathit{af}}_{\nu,\mu}$ model-checking problem for timed automata may be specified as follows:  given timed automaton $TA = (L, l_0, \Sigma_{TA}, CX, I, E)$, atomic-proposition interpretation function $M$, and $L^{\mathit{rel}, \mathit{af}}_{\nu,\mu}$ formula $\psi$ with initial environment $\theta$, determine if the initial state of $TA$ satisfies $\psi$, i.e.: is $(l,\nu) \in \satset{\psi}_{TA,M,\theta}$.  This section describes the proof-based approach that we use to solve such problems.

Our model-checking technique relies on the construction of proofs that are intended to establish the truth of judgments, or \emph{sequents}, of the form
$\sqjudgement{(l, cc)}{\psi},$
where $l \in L$ is a location, $cc \in \Phi(CX \cup CX_{f})$ is a clock constraint, and $\psi$ is a $L^{\mathit{rel}, \mathit{af}}_{\nu,\mu}$ formula. Note that $cc$ includes clocks from the timed automaton as well as any clocks used in freeze quantifications.  Note that semantically, a clock constraint $cc$ can be viewed as the set of valuations $cc = \mset{\nu \barsep \nu \models cc}$; likewise, we can encode a valuation $\nu$ as the clock constraint $cc_{\nu} = x_1 = \nu(x_1) \lgcand \ldots \lgcand x_n = \nu(x_n)$. A \emph{proof rule} contains a finite number of hypothesis sequents and a conclusion sequent and may be written as follows.
\begin{center}
\begin{proofinlinepf}
\AxiomC{Premise 1}
\AxiomC{$\ldots$}
\AxiomC{Premise $n$}
\RightLabel{(\emph{Rule Name})}
\TrinaryInfC{Conclusion}
\end{proofinlinepf} 
\end{center}
The intended reading of such a rule is that if each premise is valid, then so is the conclusion. Some proof rules, \emph{axioms}, have no premises and thus assert the truth the validity of their conclusion. Given a collection of rules, our verifier builds a \emph{proof} by chaining these proof rules together. A proof is \emph{valid} if the proof rules are applied properly, meaning that the premise of the previous rule is the conclusion of the next rule. The proof rules are designed to be sound and complete, meaning: $(l,\nu) \in \satset{\psi}_{TA,M,\theta}$ if and only if there is a valid proof for $(l, cc_{\nu}) \vdash \psi$.  The proof-construction process proceeds in an ``on-the-fly" manner:  rules whose conclusion matches the sequent to be proved are applied to this goal sequent, yielding new sequents that must be proved.  This procedure is applied recursively, and systematically, until either a proof is found, or none can be.

\subsection{Proof Rules for $L^{af}_{\nu,\mu}$ Over Timed Automata}

The proof-based approach in this paper is inspired by a generic proof framework in \cite{zhang-fast-generic-2005,zhang-fast-on-the-fly-2005} based on a general theory called Predicate Equation Systems (PES). PES involved fixpoint equations over first-order predicates and used the proof-search to  establish the validity of a PES. For practical reasons, one generally wishes to avoid the construction of the PES explicitly; this paper adopts this point of view, and the proof rules that it presents thus involve explicit mention of timed-automata notions, including location and edge.
A selection of proof rules derived from~\cite{zhang-fast-generic-2005,zhang-fast-on-the-fly-2005} is given in Figure \ref{fig:selectpr}. The remaining rules are in Appendix A. 
 Several comments are in order.
\begin{enumerate}
\item
Each rule is intended to relate a conclusion sequent involving a formula with a specific outermost operator to premise sequents involving the maximal subformula(e) of this formula.  The name of the rule is based on this operator.
\item
The premises also involve the use of functions $succ$ and $pred$.  Intuitively, $succ((l,cc))$ represents all states that are time successors of any state whose location component is $l$ and whose clock valuation satisfies $cc$, while $pred((l,cc))$ are the time predecessors of these same states. These operators may be computed symbolically; that is, for any $(l, cc)$ there is a $cc'$ such that $(l,cc')$ is equivalent to $succ((l,cc))$.  
\item
Some of the rules involve \emph{placeholders}, which are (potentially) unions of clock constraints, given as (subscripted versions of) $\phi$.  Given a specific placeholder, the premise sequent $(l,cc), \phi$ is semantically equivalent to $(l, cc \lgcand \phi)$; however, for notational and implementation ease, the placeholder $\phi$ is tracked separately from the clock constraint $cc$. 
\end{enumerate}

\begin{figure}[t]
\begin{proofinlinepf}
\AxiomC{\sqjudgement{(l_1,cc \lgcand g_1)}{\psi[\lambda_1 := 0]}}
\AxiomC{\ldots}
\AxiomC{\sqjudgement{(l_n,cc \lgcand g_n)}{\psi[\lambda_n := 0]}}
\RightLabel{($\mmAXraop{a}_{Act}$), cond$\mmAXraop{a}$}
\TrinaryInfC{\sqjudgement{\sqpremta}{\mmAXra{a}{\psi}}}
\end{proofinlinepf} \parspc

cond$\mmAXraop{a}$: $\bigcup_{i}\mset{(g_i,\lambda_i,l_i)} = \mset{(l',g',\lambda') \barsep (l,a,g',\lambda',l') \in E}$ \parspc

\begin{proofinlinepf}
\AxiomC{\sqjudgement{\sqpremta, \phi_s}{\psi_1}}
\AxiomC{\sqjudgement{\sqpremta, \lgcnot \phi_{s}}{\psi_2}}
\RightLabel{($\vee_{c}$)}
\BinaryInfC{\sqjudgement{\sqpremta}{\psi_1 \lgcor \psi_2}}
\end{proofinlinepf} 
\begin{proofinlinepf}
\AxiomC{\sqjudgement{succ(\sqpremta)}{\psi}}
\RightLabel{($\forall_{t1}$)}
\UnaryInfC{\sqjudgement{\sqpremta}{\tfoAsh{\psi}}}
\end{proofinlinepf} \parspc

\begin{proofinlinepf}
\AxiomC{\sqjudgement{succ(\sqpremta), \phi_{s}}{\psi}}
\AxiomC{\sqjudgement{succ(\sqpremta, \phi_{\forall})}{succ(\sqpremta) \lgcand \phi_{s}}}
\RightLabel{($\forall_{t2}$)}
\BinaryInfC{\sqjudgement{\sqpremta, \phi_{\forall}}{\tfoAsh{\psi}}}
\end{proofinlinepf} \parspc

\begin{proofinlinepf}
\AxiomC{\sqjudgement{succ(\sqpremta), \phi_{s}}{\psi}}
\AxiomC{\sqjudgement{\sqpremta}{pred(\phi_{s})}}
\RightLabel{($\exists_{t1}$)}
\BinaryInfC{\sqjudgement{\sqpremta}{\tfoEsh{\psi}}}
\end{proofinlinepf}
\begin{proofinlinepf}
\AxiomC{\sqjudgement{succ(\sqpremta), \phi_{s}}{\psi}}
\AxiomC{\sqjudgement{\phi_{\exists}}{pred(\phi_{s})}}
\RightLabel{($\exists_{t2}$)}
\BinaryInfC{\sqjudgement{\sqpremta, \phi_{\exists}}{\tfoEsh{\psi}}}
\end{proofinlinepf}  
\caption{Select proof rules from \cite{zhang-fast-generic-2005,zhang-fast-on-the-fly-2005} adapted for timed automata and MES.}
\label{fig:selectpr}
\end{figure}

More discussion of placeholders is in order.  Intuitively, placeholders encode a set of clock valuations that will make a sequent valid, and which will be computed once the proof is complete.  In practice, we are interested in computing the largest such set. To understand their use in practice, consider the operator $\exists$. To check $\exists$, we need to find some time advance $\delta$ such that $\psi$ is satisfied after $\delta$ time units. Rather than non-deterministically guessing $\delta$, we use a placeholder $\phi_{s}$ in the left premise in rule $\exists_{t1}$ to encode all the time valuations that ensure satisfaction of $\psi$. The right premise then checks that the placeholder $\phi_s$ is some $\delta$-unit time elapse from $(l,cc)$. The placeholder allows us to delay the non-deterministic guess of the value of $\phi_s$ until it is no longer required to guess. Additionally, for performance reasons, we use \emph{new placeholders} to handle time advance operators for sequents with placeholders.  An example may be found in Rule $\exists_{t2}$, where a new placeholder $\phi_\exists$ is introduced in the right premise.  While useful for performance, this choice results in subtle implementation complexities, which we discuss in Section \ref{ss:placecomplex}.

\textbf{Constructing Proofs.} Given sequents and proof rules, proofs now may be constructed in a goal-directed fashion. A sequent is proven by applying a proof rule whose conclusion matches the form of that sequent, yielding as subgoals the corresponding premises of that rule.  These subgoals may then recursively be proved.  If a sequent may be proved using a rule with no premises, then the proof is complete; similarly, if a sequent is encountered a second time (because of loops in the timed automaton and recursion in an MES), then the second occurrence is also a leaf.  Details may be found in~\cite{zhang-fast-generic-2005,zhang-fast-on-the-fly-2005}.  If the recurrent leaf involves an MES variable with parity $\mu$, then the leaf is unsuccessful; if it involves a variable with parity $\nu$, it is successful.  A proof is valid if all its leaves are successful.

\begin{example} To illustrate the proof rules, consider the timed automaton in Figure \ref{fig:ta1}. Suppose we wish to prove the sequent $(2: in, x_1 \leq 3) \vdash \mmAXra{exit}{0: far}$.  Utilizing the first proof rule in Figure \ref{fig:selectpr}, we get the proof:
\begin{prooftreepf}
\AxiomC{}
\UnaryInfC{\sqjudgement{(0: far, 1 \leq x_1 \leq 3)}{0:far}}
\UnaryInfC{\sqjudgement{(2: in, x_1 \leq 3) }{\mmAXra{exit}{0: far}}}
\end{prooftreepf} 
In this rule, we intersect the clock constraint with the guard $x_1 \geq 1$ (if $x_1 < 1$, then there are no possible actions so the formula is true), make the destination location the new sequent, and ask if the destination satisfies the formula. Since the location is $0: far$, the proof is complete.
\end{example}

\subsection{New Proof Rules for the Relativized Operators of $L^{\mathit{rel}, \mathit{af}}_{\nu,\mu}$}
\label{ss:newrules}

We now introduce rules for handling the relativized time-passage modalities in $L_{\nu,\mu}^{\mathit{rel}, \mathit{af}}$. Figure \ref{fig:newrules} gives the rules for the operator $\tfoEr{\psi_1}{\psi_2}$. For the $\tfoAr{\psi_1}{\psi_2}$ operator, we use the derivation given in Lemma \ref{lem:deriv}. 

\begin{figure}[tb]
\begin{proofinlinepf}
\AxiomC{\sqjudgement{\sqpremta, \phi_{s_1}}{\psi_1}}
\AxiomC{\sqjudgement{\sqpremta, \phi_{s_2}}{\psi_2}}
\AxiomC{\sqjudgement{\sqpremta}{\phi_{s_1} \lgcor \phi_{s_2}}}
\RightLabel{($\vee_{s}$)}
\TrinaryInfC{\sqjudgement{\sqpremta}{\psi_1 \lgcor \psi_2}}
\end{proofinlinepf} \parspc 

\begin{proofinlinepf}
\AxiomC{\sqjudgement{\sqpremta, \phi_{s_1}}{\psi_1}}
\AxiomC{\sqjudgement{\sqpremta, \phi_{s_2}}{\psi_2}}
\AxiomC{\sqjudgement{\sqpremta, \phi_{\vee}}{\phi_{s_1} \lgcor \phi_{s_2}}}
\RightLabel{($\vee_{s2}$)}
\TrinaryInfC{\sqjudgement{\sqpremta, \phi_{\vee}}{\psi_1 \lgcor \psi_2}}
\end{proofinlinepf} \parspc

\begin{proofinlinepf}
\AxiomC{\sqjudgement{succ(\sqpremta), \phi_{s}}{\psi_{2}}}
\AxiomC{\sqjudgement{succ(\sqpremta) ,pred_{<}(\phi_{s})}{\psi_{1}}}
\AxiomC{\sqjudgement{\sqpremta}{pred(\phi_{s})}}
\RightLabel{($\exists_{r1}$)}
\TrinaryInfC{\sqjudgement{\sqpremta}{\tfoEr{\psi_1}{\psi_2}}}
\end{proofinlinepf} \parspc

\begin{proofinlinepf}
\AxiomC{\sqjudgement{succ(\sqpremta), \phi_{s'}}{\psi_{2}}}
\AxiomC{\sqjudgement{succ(\sqpremta) ,pred_{<}(\phi_{s'})}{\psi_{1}}}
\AxiomC{\sqjudgement{\sqpremta, \phi_{s}}{pred(\phi_{s'})}}
\RightLabel{($\exists_{r2}$)}
\TrinaryInfC{\sqjudgement{\sqpremta, \phi_{s}}{\tfoEr{\psi_1}{\psi_2}}}
\end{proofinlinepf} 

\caption{Proof Rules for $\vee$ and $\tfoEr{\phi_1}{\phi_2}$.}
\label{fig:newrules} 
\end{figure}

Here is an explanation of the proof rule $\exists_{r1}$; the proof rule $\exists_{r2}$ is similar. The idea is for the placeholder $\phi_{s}$ to encode the $\delta$ time advance needed for $\psi_1$ to be true. The proof-rule premises enforce that this placeholder has three properties:
\begin{enumerate}
	\item \emph{Left premise:} This premise checks that after the time advance taken by $\phi_{s}$, $\psi_2$ is satisfied.
	\item \emph{Middle premise:} This premise checks that until all $\delta$ time-units have elapsed, that $\psi_1$ is indeed true. The $pred_{<}(\phi_s)$ encodes the times before $\phi_s$.
	\item \emph{Right premise:} This premise checks that $\phi_s$ encodes some range of time elapses $\delta$, ensuring that the state can elapse to valuations in $\phi_s$.
\end{enumerate}
To implement this rule, we check the premises in left-to-right order. Some subtleties involving the middle premise are discussed in Section \ref{ss:placecomplex}.

Now we give the claims ensuring the correctness of these new proof rules. Their proofs are in Appendix B. This first lemma is a corrected version of a similar lemma in \cite{bouyer-timed-modal-2011}.

\begin{lemma} $\tfoAr{\phi_1}{\phi_2}$  is logically equivalent to $\tfoAsh{\phi_2} \lgcor \tfoEr{\phi_2}{\phi_1 \lgcand \phi_2}$.
\label{lem:deriv}
\end{lemma}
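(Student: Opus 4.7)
The plan is to prove semantic equivalence directly by unfolding the informal descriptions of the relativized time-passage operators and of $\tfoAsh{\phi_2}$ given after Definition \ref{def:lnumusyn}; no induction on formula structure or appeal to fixpoint machinery is required, since the equivalence is pointwise over time advances $\delta$. Fix a timed automaton $TA$, labeling $M$, environment $\theta$, and extended state $(l,\nu)$, and write $\chi_i(\delta)$ as shorthand for ``$(l,\nu+\delta)$ satisfies $\phi_i$'' for $i \in \mset{1,2}$. Reading the release intuition in the usual inclusive sense, $\tfoAr{\phi_1}{\phi_2}$ holds at $(l,\nu)$ iff either (i) $\chi_2(\delta)$ for all $\delta \geq 0$, or (ii) there exists $\delta^\star \geq 0$ with $\chi_1(\delta^\star)$ and $\chi_2(\delta)$ for every $\delta \leq \delta^\star$. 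Similarly, $\tfoAsh{\phi_2} \lgcor \tfoEr{\phi_2}{\phi_1 \lgcand \phi_2}$ holds iff either (i) $\chi_2(\delta)$ for all $\delta \geq 0$, or (ii$'$) there exists $\delta^\star \geq 0$ with $\chi_1(\delta^\star) \lgcand \chi_2(\delta^\star)$ and $\chi_2(\delta)$ for every $\delta < \delta^\star$.

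For the $(\Leftarrow)$ direction, the $\tfoAsh{\phi_2}$ disjunct on the right-hand side gives case (i) immediately. If instead $\tfoEr{\phi_2}{\phi_1 \lgcand \phi_2}$ is witnessed by some $\delta^\star$, then I obtain $\chi_1(\delta^\star) \lgcand \chi_2(\delta^\star)$ at the release time together with $\chi_2(\delta)$ on $[0,\delta^\star)$; adjoining the boundary value $\chi_2(\delta^\star)$ from the first fact yields $\chi_2$ on the closed interval $[0,\delta^\star]$, which together with $\chi_1(\delta^\star)$ delivers case (ii) of the left-hand side.

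For the $(\Rightarrow)$ direction, assume the left-hand side. If case (i) holds, there is nothing to prove. Otherwise case (ii) supplies some $\delta^\star$ with $\chi_1(\delta^\star)$ and $\chi_2(\delta)$ for every $\delta \leq \delta^\star$; in particular $\chi_2(\delta^\star)$ holds, so $\delta^\star$ simultaneously witnesses $\phi_1 \lgcand \phi_2$ at the release time and, a fortiori, $\chi_2(\delta)$ for every $\delta < \delta^\star$, which is precisely case (ii$'$) of $\tfoEr{\phi_2}{\phi_1 \lgcand \phi_2}$.

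The main subtlety, and presumably the reason the lemma is flagged as a \emph{corrected} version of the analogous statement in \cite{bouyer-timed-modal-2011}, is the treatment of the boundary at $\delta^\star$: the release semantics insist that $\phi_2$ hold \emph{at} the release point, while the relativized $\exists$---reflected by $pred_{<}$ in rule $\exists_{r1}$---only constrains its first argument on the open interval strictly below the witness. Relativising against $\phi_1 \lgcand \phi_2$ rather than against $\phi_1$ alone is exactly what reintroduces the missing boundary value, and I would make sure to exhibit this step explicitly in both directions rather than let it slide under the interval manipulation.
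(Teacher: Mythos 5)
Your backward direction is essentially sound, but the forward direction has a genuine gap: you have built the conclusion into your reading of the operator. You take $\tfoAr{\phi_1}{\phi_2}$ to hold iff (i) $\chi_2(\delta)$ for all $\delta$, or (ii) there is a \emph{single} $\delta^\star$ with $\chi_1(\delta^\star)$ and $\chi_2(\delta)$ for every $\delta \leq \delta^\star$ --- but (ii) is already, up to the trivial boundary manipulation you then perform, the right-hand side of the lemma. The definition the paper actually works from (stated explicitly at the start of its proof) is pointwise: $\forall \delta \geq 0:\ \bigl((l,\nu+\delta)\models\phi_2 \ \lgcor\ \exists \delta' < \delta:\ (l,\nu+\delta')\models\phi_1\bigr)$, in which the witness $\delta'$ may vary with $\delta$. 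The whole content of the lemma is the passage from this pointwise form to a single uniform witness at which $\phi_1$ and $\phi_2$ hold \emph{simultaneously} with $\phi_2$ holding throughout beforehand, and your proof of $(\Rightarrow)$ opens by assuming that passage has already been made (``case (ii) supplies some $\delta^\star\ldots$''). Your two characterizations are not obviously interchangeable in dense time: for instance, with $\phi_2 = (x\leq 1)$ and $\phi_1 = (x>1 \lgcand x<2)$ from $x=0$, the pointwise condition is satisfied at every $\delta$ (for $\delta>1$ one picks $\delta'\in(1,\min(\delta,2))$), yet no $\delta^\star$ satisfies $\phi_1\lgcand\phi_2$. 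So the equivalence of the two readings is exactly what must be argued, not assumed.

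This extraction step is where the paper spends all of its effort in the forward direction: it partitions the time advances into finitely many region-induced groups, well-orders the groups, takes $\delta_s$ in the smallest group where the existential disjunct fires, produces $\delta_p < \delta_s$ with $\chi_1(\delta_p)$, and argues that $\chi_2(\delta_p)$ because everything below that group still satisfies the left disjunct. The delicacy is that $\inf\mset{\delta : (l,\nu+\delta)\models\phi_1}$ need not be attained, so the earliest release point may lie in a left-open interval just after the interval on which $\phi_2$ holds; handling this is precisely what distinguishes this ``corrected'' lemma from its predecessor in Bouyer et al. You correctly sense that the subtlety is a boundary issue at $\delta^\star$, but you locate it in the wrong place --- in relativizing against $\phi_1\lgcand\phi_2$ rather than $\phi_1$, which is immediate once (ii) is granted --- rather than in justifying (ii) from the pointwise semantics, which is where all the work (and the potential for error) actually lives.
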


\begin{theorem}[Soundness and Completeness] The additional $L^{\mathit{rel}, \mathit{af}}_{\nu,\mu}$ proof rules  are sound and complete:  for any $L^{\mathit{rel}, \mathit{af}}_{\nu,\mu}$ formula $\psi$ and any state $(l,\nu)$, $(l,\nu) \in \satset{\psi}_{M,TA,\theta}$ if and only if $(l, cc_{\nu}) \vdash \psi$.
\label{thm:soundcomplete}
\end{theorem}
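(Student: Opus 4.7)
The plan is to leverage the existing soundness and completeness results for the proof rules of \cite{zhang-fast-generic-2005,zhang-fast-on-the-fly-2005} (those of Figure \ref{fig:selectpr} and Appendix A) and to extend them with the new cases introduced in Figure \ref{fig:newrules}. For the universal relativized operator $\tfoAr{\psi_1}{\psi_2}$, I appeal to Lemma \ref{lem:deriv}, which reduces it to $\tfoAsh{\psi_2} \lgcor \tfoEr{\psi_2}{\psi_1 \lgcand \psi_2}$; so once soundness and completeness are established for $\lgcor$, for $\tfoAsh{}$ (already in the prior framework), and for $\tfoEr{}{}$, no separate treatment of $\forall^{rel}$ is required. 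Thus the new work reduces to verifying the three-premise disjunction rules $\vee_s$, $\vee_{s2}$ and the existential relativized rules $\exists_{r1}$, $\exists_{r2}$. The overall argument proceeds by induction on proof structure for soundness and by induction on the formula, with the usual ordinal indexing of fixpoint approximants, for completeness, so that recurrent leaves at $\nu$-variables (successful) and $\mu$-variables (unsuccessful) are handled exactly as in the original framework.

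For soundness, I would prove a local semantic lemma for each new rule: assuming each premise is semantically valid, so is the conclusion. For $\exists_{r1}$, fix $(l,\nu)$ with $\nu \models cc$; the right premise gives $\nu \in pred(\phi_s)$, so there exists $\delta \geq 0$ with $\nu + \delta \models \phi_s$; the left premise then yields $(l,\nu + \delta) \in \satset{\psi_2}$, and for every $0 \leq \delta' < \delta$ we have $\nu + \delta' \models pred_<(\phi_s)$, so the middle premise gives $(l,\nu + \delta') \in \satset{\psi_1}$. This matches the semantics of $\tfoEr{\psi_1}{\psi_2}$ verbatim. Rule $\exists_{r2}$ is analogous, with the conclusion's $\phi_s$ restricting the range of $\nu$. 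Soundness for $\vee_s$ and $\vee_{s2}$ is an immediate union argument on $\phi_{s_1} \cup \phi_{s_2}$.

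For completeness, given $(l,\nu) \in \satset{\tfoEr{\psi_1}{\psi_2}}_{TA,M,\theta}$, I would witness the existential by explicitly constructing
\[
\phi_s = \bigl\{\, \nu + \delta \;\bigm|\; \delta \geq 0,\; (l,\nu+\delta) \in \satset{\psi_2},\; \text{and } (l,\nu+\delta') \in \satset{\psi_1} \text{ for all } 0 \leq \delta' < \delta \,\bigr\}.
\]
Because only finitely many clock regions are relevant to $\psi_1$ and $\psi_2$, this set is a finite union of zones and is therefore expressible as a placeholder. The three premise sequents then follow by the induction hypothesis on the strictly smaller subformulas $\psi_1$ and $\psi_2$, together with the symbolic computation of $pred$ and $pred_<$ against $\phi_s$. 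For $\vee_s$ and $\vee_{s2}$, the natural choice is $\phi_{s_i} = \satset{\psi_i} \cap (\{l\} \times cc)$, whose union covers $cc$ precisely because the disjunction holds.

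The main obstacle will be the completeness argument that the constructed $\phi_s$ is realizable as a clock constraint (or finite union thereof) that the proof system can actually carry as a placeholder; this is where the region-graph/zone finiteness machinery for timed automata is needed, and where the alternation-free restriction becomes essential for bounding the ordinal recursion. A related and more delicate point is the distinction between $pred_<(\phi_s)$ and $pred(\phi_s)$ in rule $\exists_{r1}$: the middle premise requires $\psi_1$ to hold strictly before the witnessing time, so boundary valuations where $\psi_2$ holds but $\psi_1$ fails must be correctly excluded by the symbolic encoding of $pred_<$. These are exactly the placeholder-handling subtleties flagged for Section \ref{ss:placecomplex}, and any careful completeness proof must verify that the strict-inequality semantics survives the symbolic computation.
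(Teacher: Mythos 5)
Your proposal follows essentially the same route as the paper's proof: both inherit soundness and completeness of the base rules from the prior framework, dispatch $\forall^{rel}$ via Lemma \ref{lem:deriv}, and then argue the new rules ($\vee_s$, $\exists_{r1}$, $\exists_{r2}$) locally --- soundness by reading the three premises against the semantics of $\tfoEr{\psi_1}{\psi_2}$, completeness by exhibiting a placeholder witness. Your completeness witness (the full set of valuations reachable by a valid time advance, justified as a finite union of zones) is a slightly more explicit construction than the paper's choice of a single witnessing advance $t$, but this is a refinement of the same argument rather than a different approach.
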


\section{Optimizing Performance via Derived Proof Rules}
\label{s:optderiv}

To simplify reasoning about soundness and completeness, sets of proof rules are often kept small and simple. However, we can improve the performance or proof search by having the computer work with \emph{derived} proof rules. We describe two such situations where we use derived proof rules. We discuss a third situation, invariants, in Appendix C.2.

\textbf{Optimizing $\vee$.} For performance reasons we replace a rule for $\vee$ in \cite{zhang-fast-generic-2005,zhang-fast-on-the-fly-2005}. Those papers use the proof rule $\vee_{c}$ given in Figure \ref{fig:selectpr}. We instead use the proof rule $\vee_{s}$, which we give in Figure \ref{fig:newrules}.  By pushing fresh placeholders for both branches, we avoid computing the complementation operator, which often results in forming a placeholder involving a union of clock constraints. 

\textbf{Optimizing $\tfoAr{\psi_1}{\psi_2}$.} Recall the derived formula for $\tfoAr{\psi_1}{\psi_2}$ from Lemma \ref{lem:deriv}:
$\tfoAr{\psi_1}{\psi_2}$ is equivalent to  $\tfoAsh{\psi_2} \lgcor \tfoEr{\psi_2}{\psi_1 \lgcand \psi_2}$.
This formula requires $\psi_{2}$ to be checked three times. However, by modifying the proof rule, we notice that we can perform the checking of $\psi_2$ \emph{only once}. First, we rewrite $\tfoEr{\psi_2}{\psi_1 \lgcand \psi_2}$ as $\tfoEr{\leq, \psi_2}{\psi_1}$, pushing the boundary case into the left subformula. Second, the key is to compute the largest placeholder that satisfies $\psi_2$, to remember those states (memoize), and then to reason with this placeholder (and its time predecessor) to find the placeholders needed to satisfy the two branches of the derived formula. This reasoning allows the tool to reason with the subformula $\psi_2$ only once, reusing the obtained information. The derived proof rules are in Figure \ref{fig:farpr}. The first two handle the simpler cases when either $\psi_2$ is always true (or when $\psi_1$ is always false) or $\psi_1$ is immediately true (such as when $\psi_1$ is an atomic proposition); the third rule ($\forall_{ro3}$) is the more complex case. The proof rules involving placeholders are similar. Their derivations as well as their proofs of soundness and completeness are in Appendix C.1.

\begin{figure}
\begin{proofinlinepf}
\AxiomC{\sqjudgement{\sqpremta}{\tfoAsh{\psi_{2}}}}
\RightLabel{($\forall_{ro1}$)}
\UnaryInfC{\sqjudgement{\sqpremta}{\tfoAr{\psi_1}{\psi_2}}}
\end{proofinlinepf} 
\begin{proofinlinepf}
\AxiomC{\sqjudgement{\sqpremta}{\psi_{1} \lgcand \psi_2}}
\RightLabel{($\forall_{ro2}$)}
\UnaryInfC{\sqjudgement{\sqpremta}{\tfoAr{\psi_1}{\psi_2}}}
\end{proofinlinepf} \parspc

\begin{proofinlinepf}
\AxiomC{\sqjudgement{succ(\sqpremta), \phi_{s_1}}{\psi_{1}}}
\noLine
\UnaryInfC{\sqjudgement{succ(\sqpremta), \phi_{s_2}}{\psi_{2}}}
\noLine
\UnaryInfC{\sqjudgement{succ(\sqpremta), pred(\phi_{s_1})}{succ(\sqpremta), \phi_{s_2}}}
\AxiomC{\sqjudgement{\phi_{\exists}}{pred(\phi_{s_1})}}
\noLine
\UnaryInfC{\sqjudgement{succ(\sqpremta, \phi_{\forall})}{succ(\sqpremta) \lgcand \phi_{s_2}}}
\noLine
\UnaryInfC{\sqjudgement{\sqpremta}{ \phi_{\exists} \lgcor \phi_{\forall}}}
\RightLabel{($\forall_{ro3}$)}
\BinaryInfC{\sqjudgement{\sqpremta}{\tfoAr{\psi_1}{\psi_{2}}}}
\end{proofinlinepf}
\caption{Derived proof rules for $\tfoAr{\psi_1}{\psi_2}$.}
\label{fig:farpr}
\end{figure}

\section{Implementation Details}
\label{s:mcalg}

\subsection{Addressing Non-convexity: Zone Unions}

For a subset of properties including safety properties, \emph{clock zones}, or convex sets of clock valuations, are used to make the model-checking as coarse-grained as possible. However, as shown in \cite{wang-tctl-inevitability-2006}, certain automata with certain formulas require non-convex sets of clock valuations (unions of clock zones) to be model-checked correctly. For simplicity, we use a list of Difference Bound Matrices (DBMs) to implement unions of clock zones. Other more complex data structures have been developed which include the Clock Difference Diagram (CDD) \cite{behrmann-efficient-timed-1999} and Clock Restriction Diagram (CRD) \cite{wang-efficient-verification-2004}.

\subsection{Addressing Performance: Simpler PES Formulas}
\label{ss:simplermes}

When writing safety and liveness properties, we can use the formulas from \cite{fontana-expressiveness-results-2014}. However, in the common case where there are no nested temporal operators and the formula does not involve clock constraints, we can simplify the formulations considerably. In these cases, the subformula is a conjunction and disjunction of atomic propositions, and is represented by $p$ or $q$. Here are some simplifications:
\begin{align}
\ctlAG{p} \equiv \ &Y \mesnueq p \lgcand \tfoAsh{\mmAX{Y}} \\
\ctlAF{p} \equiv \ &Y \mesmueq p \lgcor \Bigl(\tfoAsh{\mmAX{Y}} \lgcand \tfoEsh{\tfreeze{z}{\tfoAsh{z < 1}}}\Bigr) \label{eq:s2} \\
\ctlEF{p} \equiv \ &Y \mesmueq p \lgcor \tfoEsh{\mmEX{Y}} \\
\ctlEG{p} \equiv \ &Y \mesnueq p \lgcand \Bigl(\tfoEsh{\mmEX{Y}} \lgcor \tfoAsh{\tfreeze{z}{\tfoEsh{z \geq 1}}}\Bigr)
\end{align}
The correctness proofs for these simplified formulations are in Appendix C.3.

The TCTL operators here are: \ctlAG{p} (always $p$), \ctlAF{p} (inevitably $p$), \ctlEG{p} (there exists a path where always $p$), and \ctlEF{p} (possibly $p$).  
One noticeable feature is that these simplified liveness properties do not require relativization. Another noticeable feature is that the $\vee$ can be simplified to not use placeholders; consequently, $\ctlAG{p}$ and $\ctlAF{p}$ do not require placeholders. 
Additionally, our tool directly computes  $\tfoEsh{\tfreeze{z}{\tfoAsh{z < 1}}}$, time can elapse forever without an action transition, and its dual, $\tfoAsh{\tfreeze{z}{\tfoEsh{z \geq 1}}}$.

\subsection{Placeholder Implementation Complexities}
\label{ss:placecomplex}

Consider the two placeholder premises in the $\tfoAsh{\psi}$ and $\tfoEr{\psi_1}{\psi_2}$ proof rules in Figures \ref{fig:selectpr} and \ref{fig:newrules}. The placeholder sequents are given here:
\begin{equation}
\sqjudgement{succ(\sqpremta, \phi_{\forall})}{succ(\sqpremta) \lgcand \phi_{s}} \text{ and } \sqjudgement{succ(\sqpremta) ,pred_{<}(\phi_{s})}{\psi_{1}}
\end{equation}

In soundness and completeness proofs, we use soundness to give us a placeholder to show that the formula holds, and with completeness, we argue that some placeholder exists. Given the complexities of the formulas, the tool needs to find the \emph{largest} such placeholder. The rules are designed for the tool to implement them in a left-to-right fashion, where placeholders are tightened by right-hand rules. However, as the placeholders are tightened, we need to make sure that the tightened placeholder still satisfies the left-hand premise. For instance, consider the second of the above placeholders. As we tighten the placeholder to satisfy $\psi_1$, we need to check that this placeholder is the predecessor$_{<}$ of the placeholder that satisfies $\psi_2$. These checks take extra algorithmic work.

\section{Performance Evaluation}
\label{s:examples}

We present the results of an experimental evaluation of our method that demonstrates the types of timed automata and specifications the system can model check. Furthermore, on the subset of specifications that UPPAAL supports, we compare our tool's time performance to their tools's time performance.

\subsection{Methods: Evaluation Design}

In our case study, we use four different models: Carrier Sense, Multiple Access with Collision Detection (CSMA); Fischer's Mutual Exclusion (FISCHER); Generalized Railroad Crossing (GRC); and Leader election (LEADER). For more information on these models, see Appendix D.1 or \cite{heitmeyer-the-generalized-1994,zhang-fast-generic-2005,zhang-fast-on-the-fly-2005}.

For each model, we start at 4 processes and scale the model up by adding more processes (up to 8 processes). For each model we model-checked one valid safety (always) specification ($as$), one invalid safety specification ($bs$), one valid liveness (inevitably) specification ($al$), and one invalid liveness specification ($bl$). Each of these cases involves only one temporal operator: $\psi_1$ involves conjunctions and disjunctions of atomic propositions and clock constraints. In addition we tested 4 additional specifications on each property ($M1$, $M2$, $M3$, and $M4$), some of which are the leads to property $p \leadsto q$. Out of these specifications, at least one (usually $M4$) is a property with no known equivalent TCTL formula. The specifications checked are listed in Appendix D.2. The experiments were run on an Intel Mac with 8GB ram and a quad-core 2 GHz Intel Core i7 processor running OS 10.7. Times were measured with the UNIX utility \texttt{time}.

\subsection{Data and Results}

The data is provided in Tables \ref{tab:data2} and \ref{tab:data1}. Table \ref{tab:data2} contains the remaining specifications that are not supported by UPPAAL. Table \ref{tab:data1} contains the examples that are supported both by our tool (PES) and by UPPAAL (UPP), with the number indicating the number of processes used in the model.  We use the following abbreviations: TO (timeout: the example took longer than 2 hours), TOsm (the example timed out with fewer process), and O/M (out of memory). Since our tool supports a superset of the specifications that UPPAAL can support, there are specifications that our tool supports that UPPAAL does not. A scatter plot of the data in Table \ref{tab:data1} is given in Appendix D.3. 

\rowcolors{1}{}{lightgray}
\begin{table}[tb]
\caption{Examples that UPPAAL does not support. All times are in seconds (s).}
\begin{tabular}{|l|r|r|r|r|r|}
  \hline
\textbf{File} & \textbf{PES4} & \textbf{PES5} & \textbf{PES6} & \textbf{PES7} & \textbf{PES8} \\ 
  \hline
CSMA-as & 0.29 & 4.62 & 139.16  & 6696.08 & TO \\ 
   \hline
CSMA-M3 & 0.01 & 0.03 & 0.14 & 0.80 & 3.99 \\ 
   \hline
CSMA-M4 & 0.01 & 0.03 & 0.14 & 0.71 & 3.66 \\ 
   \hline
FISCHER-M3 & 0.14 & 2.51 & 79.17 & TO & TOsm \\ 
   \hline
FISCHER-M4 & 0.00 & 0.00 & 0.00 & 2.04 & 2.42 \\ 
   \hline
GRC-M2 & 0.01 & 0.01 & 0.01 & 0.02 & 0.03 \\ 
   \hline
GRC-M4 & 0.00 & 0.00 & 0.01 & 0.02 & 0.01 \\ 
   \hline
GRC-M4ap & 0.00 & 0.00 & 0.01 & 0.01 & 0.01 \\ 
   \hline
LEADER-M1 & 0.00 & 0.00 & 0.00 & 0.01 & 0.01 \\ 
   \hline
LEADER-M3 & 0.01 & 0.08 & 2.12 & 79.05 & 4242.97 \\ 
   \hline
LEADER-M4 & 0.00 & 0.00 & 0.04 & 0.03 & 0.01 \\ 
   \hline
\end{tabular}
\label{tab:data2}
\end{table}

\rowcolors{1}{}{lightgray}
\begin{table}[tb]
\caption{Time performance in seconds (s) on examples comparing PES and UPPAAL.}
\begin{tabular}{|l|rr|rr|rr|rr|rr|} 
 \hline
\textbf{File} & \textbf{PES4} & \textbf{UPP4} & \textbf{PES5} & \textbf{UPP5} & \textbf{PES6} & \textbf{UPP6} & \textbf{PES7} & \textbf{UPP7} & \textbf{PES8} & \textbf{UPP8} \\ 
 \hline
CSMA-al & 0.01 & 1.45 & 0.03 & 0.24 & 0.13 & 0.25 & 0.72 & 0.26 & 3.65 & 0.26 \\ 
   \hline
CSMA-bl & 0.01 & 0.26 & 0.03 & 0.27 & 0.13 & 0.27 & 0.73 & 0.28 & 3.53 & 0.33 \\ 
   \hline
CSMA-bs & 0.01 & 0.33 & 0.05 & 0.27 & 0.22 & 0.27 & 1.14 & 1.33 & 5.09 & 4.66 \\ 
   \hline
CSMA-M1 & 0.01 & 0.29 & 0.03 & 0.27 & 0.14 & 0.28 & 0.73 & 0.27 & 3.69 & 0.27 \\ 
   \hline
CSMA-M2 & 0.33 & 0.35 & 5.21 & 7.00 & 154.56 & 1194.74 & TO & TO  & TOsm & TOsm  \\ 
   \hline
FISCHER-al & 0.00 & 0.51 & 0.00 & 0.27 & 0.00 & 0.28 & 0.00 & 0.40 & 0.00 & 0.27 \\ 
   \hline
FISCHER-as & 0.07 & 0.27 & 0.51 & 0.28 & 13.44 & 0.67 & 864.04 & 0.96 & TO  & 4.26 \\ 
   \hline
FISCHER-bl & 0.00 & 0.26 & 0.00 & 0.26 & 0.00 & 0.28 & 0.00 & 0.34 & 0.00 & 0.26 \\ 
   \hline
FISCHER-bs & 0.04 & 0.28 & 0.01 & 0.27 & 0.02 & 0.32 & 0.39 & 0.47 & 0.39 & 0.90 \\ 
   \hline
FISCHER-M1 & 0.00 & 0.26 & 0.00 & 0.26 & 0.00 & 0.28 & 0.00 & 0.28 & 0.00 & 0.25 \\ 
   \hline
FISCHER-M2 & 0.00 & 0.26 & 0.00 & 0.26 & 0.00 & 0.27 & 0.00 & 0.30 & 0.03 & 0.28 \\ 
   \hline
GRC-al & 0.00 & 0.27 & 0.01 & 0.28 & 0.47 & 0.59 & 0.07 & 0.44 & 0.08 & 5.45 \\ 
   \hline
GRC-as & 53.09 & 0.36 & TO & 7.11 & TOsm  & 940.51 & TOsm  & 3433.14 & TOsm & TO  \\ 
   \hline
GRC-bl & 0.00 & 0.27 & 0.00 & 0.27 & 0.01 & 0.27 & 0.01 & 0.61 & 0.01 & 0.66 \\ 
   \hline
GRC-bs & 0.11 & 0.41 & 1.91 & 0.41 & 433.59 & 1.76 & O/M & 16.19 & O/M & 52.03 \\ 
   \hline
GRC-M1 & 0.01 & 0.27 & 0.04 & 0.27 & 0.01 & 0.29 & 0.05 & 0.35 & 0.03 & 0.32 \\ 
   \hline
GRC-M3 & 0.00 & 0.27 & 0.00 & 0.31 & 0.01 & 0.56 & 0.04 & 1.23 & 0.01 & 3.85 \\ 
   \hline
LEADER-al & 0.00 & 0.28 & 0.01 & 0.33 & 0.17 & 4.30 & 5.80 & 747.82 & 573.84 & TO  \\ 
   \hline
LEADER-as & 0.00 & 0.27 & 0.01 & 0.27 & 0.22 & 0.33 & 6.23 & 0.86 & 649.52 & 8.21 \\ 
   \hline
LEADER-bl & 0.00 & 0.28 & 0.00 & 0.27 & 0.01 & 0.28 & 0.17 & 0.32 & 4.25 & 0.29 \\ 
   \hline
LEADER-bs & 0.00 & 0.27 & 0.00 & 0.28 & 0.01 & 0.28 & 0.03 & 4.99 & 0.40 & 1.57 \\ 
   \hline
LEADER-M2 & 0.00 & 0.28 & 0.02 & 0.31 & 0.38 & 3.05 & 13.53 & 504.89 & 1570.37 & TO  \\ 
   \hline 
\end{tabular}
\label{tab:data1}
\end{table}

\subsection{Analysis and Discussion}

After analyzing the data, we may draw three conclusions. First, on the examples that both our PES tool and UPPAAL support, we see that UPPAAL's performance is generally faster than ours, although, our tool performs faster on some examples. Additionally, while our tool does time out more often than UPPAAL does, most examples are verified quickly by both tools.
	Second, our tool can reasonably efficiently verify specifications that UPPAAL cannot. 
	Third, for these examples, the performance bottleneck seems to be safety properties. Even with the additional complexity of supporting the more complicated specifications (in both tables), liveness was often verified more quickly than safety properties. Here is one possible explanation: while the verifier must check the entire state space for a valid safety property, often only a subset of the state space must be checked for a liveness property. 

\section{Conclusion}
\label{s:conclusion}

	We provide the first implementation of a $L^{\mathit{rel}, \mathit{af}}_{\nu,\mu}$ timed automata model checker. Additionally, this model checker is on-the-fly, allowing for verification to explore both the timed automaton and the $L^{\mathit{rel}, \mathit{af}}_{\nu,\mu}$ formula incrementally. To support the full fragment of this logic, we extended the proof-rule framework of \cite{zhang-fast-generic-2005,zhang-fast-on-the-fly-2005} to support the relativization operators, and we optimize the tool's performance using derived proof rules. We also provided simpler $L^{\mathit{rel}, \mathit{af}}_{\nu,\mu}$ formulas for common safety and liveness formulas. While these may seem to be straightforward extensions, the rules and the extensions were \emph{designed} to be straightforward, designing the proof rules to be both easy to implement efficiently. 
	
	We then compared our tool to UPPAAL. While UPPAAL seems to perform faster more often, our tool is competitive for many of those examples, including liveness formulas.  Additionally, our tool was able to quickly verify specifications that UPPAAL does not currently support.
	
	Future work is to both further optimize the performance of our tool and to augment our tool to provide more information than just a yes or no answer. Potential information includes providing answers to these questions: Was the formula true because the premise of an implication was always false? Was the formula true because certain states were never reached? 

\section{Acknowledgements}

We thank Dezhuang Zhang for providing the code base \cite{zhang-fast-generic-2005} and for his insights.

\bibliographystyle{splncs03}
\bibliography{FORMATS2014PfRc}

\newpage

%
%
\appendix
\label{s:appendix}

\section{Timed Automata and $L^{af}_{\nu,\mu}$ Proof Rules}
\label{s:a:proofrules}

We take the proof-rule framework of \cite{zhang-fast-generic-2005,zhang-fast-on-the-fly-2005} and adapt it for timed automata and $L^{af}_{\nu,\mu}$ MES. The complete set of proof rules is given in Figures \ref{fig:allrules1} and \ref{fig:allrules2}. Figure  \ref{fig:allrules1} contains the adaptations of the rules without placeholders, and Figure \ref{fig:allrules2} contains the adaptations of the rules involving placeholders. Conditions on the proof rules are given after the rule label; the rule labels are in (), and the conditions are outside parentheses.

\begin{figure}
\begin{proofinlinepf}
\AxiomC{Premise 1}
\AxiomC{$\ldots$}
\AxiomC{Premise $n$}
\RightLabel{(\emph{Rule Template})}
\TrinaryInfC{Conclusion}
\end{proofinlinepf} 
\begin{proofinlinepf}
\AxiomC{}
\RightLabel{($p$), $p \in M(l)$}
\UnaryInfC{\sqjudgement{\sqpremta}{p}}
\end{proofinlinepf}\parspc

\begin{proofinlinepf}
\AxiomC{}
\RightLabel{($cc'$), $cc \models cc'$}
\UnaryInfC{\sqjudgement{\sqpremta}{cc'}}
\end{proofinlinepf}
\begin{proofinlinepf}
\AxiomC{\sqjudgement{\sqpremta}{\psi_i}}
\RightLabel{($p$), $X_i \overset{\nu/\mu}{=} \psi_i$}
\UnaryInfC{\sqjudgement{\sqpremta}{X_i}}
\end{proofinlinepf}
\begin{proofinlinepf}
\AxiomC{}
\RightLabel{(Empty)}
\UnaryInfC{\sqjudgement{(l,\lgfalse)}{\psi}}
\end{proofinlinepf}\parspc

\begin{proofinlinepf}
\AxiomC{$\sqpremta \vdash \psi_1$}
\RightLabel{($\vee_l$)}
\UnaryInfC{$\sqpremta \vdash \psi_1 \lgcor \psi_2$}
\end{proofinlinepf}
\begin{proofinlinepf}
\AxiomC{$\sqpremta \vdash \psi_2$}
\RightLabel{($\vee_r$)}
\UnaryInfC{$\sqpremta \vdash \psi_1 \lgcor \psi_2$}
\end{proofinlinepf} 
\begin{proofinlinepf}
\AxiomC{$\sqpremta \vdash \psi_1$}
\AxiomC{$\sqpremta \vdash \psi_2$}
\RightLabel{($\wedge$)}
\BinaryInfC{$\sqpremta \vdash \psi_1 \lgcand \psi_2$}
\end{proofinlinepf}\parspc

\begin{proofinlinepf}
\AxiomC{\sqjudgement{\sqpremta, \phi_s}{\psi_1}}
\AxiomC{\sqjudgement{\sqpremta, \lgcnot \phi_{s}}{\psi_2}}
\RightLabel{($\vee_{c}$)}
\BinaryInfC{\sqjudgement{\sqpremta}{\psi_1 \lgcor \psi_2}}
\end{proofinlinepf} 
\begin{proofinlinepf}
\AxiomC{\sqjudgement{succ(\sqpremta)}{\psi}}
\RightLabel{($\forall_{t1}$)}
\UnaryInfC{\sqjudgement{\sqpremta}{\tfoAsh{\psi}}}
\end{proofinlinepf} \parspc

\begin{proofinlinepf}
\AxiomC{\sqjudgement{succ(\sqpremta), \phi_{s}}{\psi}}
\AxiomC{\sqjudgement{\sqpremta}{pred(\phi_{s})}}
\RightLabel{($\exists_{t1}$)}
\BinaryInfC{\sqjudgement{\sqpremta}{\tfoEsh{\psi}}}
\end{proofinlinepf}  \parspc

\begin{proofinlinepf}
\AxiomC{\sqjudgement{(l, post(cc, \lambda := 0))}{\psi}}
\RightLabel{($[]$)}
\UnaryInfC{\sqjudgement{\sqpremta}{\psi[\lambda := 0]}}
\end{proofinlinepf} \parspc

\begin{proofinlinepf}
\AxiomC{\sqjudgement{\sqpremta}{\mmAXra{a_1}{\psi}}}
\AxiomC{\ldots}
\AxiomC{\sqjudgement{(\sqpremta}{\mmAXra{a_n}{\psi}}}
\RightLabel{($\mmAXop_{Act}$), $\Sigma = \mset{a_1, \ldots, a_n}$}
\TrinaryInfC{\sqjudgement{\sqpremta}{\mmAX{\psi}}}
\end{proofinlinepf} \parspc

\begin{proofinlinepf}
\AxiomC{\sqjudgement{(l_n,cc \lgcand g)}{\psi[\lambda := 0]}}
\RightLabel{($\mmEXraop{a}_{Act}$), $(l,a,g,\lambda,l') \in E$, $cc \lgcand g$ is satisfiable}
\UnaryInfC{\sqjudgement{\sqpremta}{\mmEXra{a}{\psi}}}
\end{proofinlinepf} \parspc

\begin{proofinlinepf}
\AxiomC{\sqjudgement{(\sqpremta}{\mmEXra{a_i}{\psi}}}
\RightLabel{($\mmEXop_{Act}$), $a_i \in \Sigma$}
\UnaryInfC{\sqjudgement{\sqpremta}{\mmEX{\psi}}}
\end{proofinlinepf} \parspc

\caption{Proof rules (without placeholders) adapted for timed automata and MES.}
\label{fig:allrules1}
\end{figure}

Most of the rules involving placeholders are similar (the placeholder is just an additional premise). This next figure contains the remaining rules.

\begin{figure}
\begin{proofinlinepf}
\AxiomC{Premise 1}
\AxiomC{$\ldots$}
\AxiomC{Premise $n$}
\RightLabel{(\emph{Rule Template})}
\TrinaryInfC{Conclusion}
\end{proofinlinepf} \parspc

\begin{proofinlinepf}
\AxiomC{\sqjudgement{succ(\sqpremta), \phi_{s}}{\psi}}
\AxiomC{\sqjudgement{succ(\sqpremta, \phi_{\forall})}{succ(\sqpremta) \lgcand \phi_{s}}}
\RightLabel{($\forall_{t2}$)}
\BinaryInfC{\sqjudgement{\sqpremta, \phi_{\forall}}{\tfoAsh{\psi}}}
\end{proofinlinepf} \parspc

\begin{proofinlinepf}
\AxiomC{\sqjudgement{succ(\sqpremta), \phi_{s}}{\psi}}
\AxiomC{\sqjudgement{\phi_{\exists}}{pred(\phi_{s})}}
\RightLabel{($\exists_{t2}$)}
\BinaryInfC{\sqjudgement{\sqpremta, \phi_{\exists}}{\tfoEsh{\psi}}}
\end{proofinlinepf}  \parspc

\begin{proofinlinepf}
\AxiomC{\sqjudgement{(l, post(cc, \lambda := 0)), \phi_{s}}{\psi}}
\AxiomC{\sqjudgement{\phi_{[]}}{\phi_{s}[\lambda := 0]}}
\RightLabel{($[]_p$)}
\BinaryInfC{\sqjudgement{\sqpremta, \phi_{[]}}{\psi[\lambda := 0]}}
\end{proofinlinepf}

\caption{Proof rules (involving placeholders) adapted for timed automata and MES.}
\label{fig:allrules2}
\end{figure}

A note about clock resets (substations of clocks to $0$ in $\psi$). The formal definition uses the $post$ operator from \cite{zhang-fast-generic-2005,zhang-fast-on-the-fly-2005}, defined as:
\begin{equation}
post(cc, \lambda := e) \overset{def}{=} \foEo{v}{\bigl(\lambda = (e[\lambda := v]) \lgcand cc[\lambda := v]\bigr)} 
\end{equation}
In the special case of resetting clocks to $0$, computing post results in one of two cases:
\begin{enumerate}
	\item If the original clock constraint $cc$ is unsatisfiable, $post((l,cc), \lambda := 0)$ produces $(l,cc')$ where $cc'$ is logically equivalent to $\lgfalse$.
	\item Otherwise, $cc$ is a satisfiable clock constraint, and $post((l,cc), \lambda := 0)$ becomes $(l,reset(cc, \lambda := 0)$, where $reset(cc, \lambda :=0)$ is the clock zone reset operator given in \cite{alur-timed-automata-1999}.
\end{enumerate}

\section{Proofs of Results}
\label{s:a:proofs}

When discussing the proofs of the formulas, we will use \emph{extended states.}  To handle freeze quantification, timed automaton states $(l,\nu)$ are extended to handle freeze quantification ($\tfreeze{z}{\psi}$); the new \emph{extended state} is $(l,\nu,\nu_{f})$ and $\nmfunc{\nu_{f}}{CX_{f}}{\Rr^{\geq 0}}$ is the valuation for all the clocks introduced by freeze quantification. This formalism comes from \cite{bouyer-timed-modal-2011}. When clear from context, we will refer to an extended state as $(l,\nu)$ and omit the explicit notation of $\nu_{f}$. 

\begin{proof}[Proof of Lemma \ref{lem:deriv}] We prove both directions of the lemma.  Let $TA$ be an arbitrary timed automaton and let $(l,\nu, \nu_{f})$ be an arbitrary (extended) state in $TA$. 

First, suppose that $(l,\nu, \nu_{f}) \models \tfoAr{\psi_1}{\psi_2}$. By definition, $\forall \delta \geq 0: \bigl((l, \nu + \delta, \nu_{f} + \delta) \models \psi_2 
\lgcor \exists \delta', 0 \leq \delta' < \delta: (l, \nu + \delta', \nu_{f} + \delta') \models \psi_1 \bigr)$. Notice that the entire quantification is inside the $\forall \delta$. Now for each time instance, one of the two disjunctions inside the $\forall$ is true. We split the cases into two cases: \parspc

\noindent \textbf{Case 1:} The $\psi_2$ disjunct is always true. 

Then the formula reduces to $\forall \delta \geq 0: \bigl((l, \nu + \delta, \nu_{f} + \delta) \models \psi_2$. By definition, this means that $(l,\nu,\nu_{f}) \models \tfoAsh{\psi_2}$. \parspc

\noindent \textbf{Case 2:} The $\exists$ disjunct is satisfied for at least one such $\delta$.

While $\Rr^{\geq 0}$ is not well-ordered with respect to $\leq$, we utilize that there are a finite number of constraints involving a finite number of (possibly not integer) constants. We can then group all $\delta$ values into groups of consecutive values based on whether the group of $\delta$ values satisfy $\psi_1$, $\psi_2$, both, or neither. By construction of $\psi_1$ and $\psi_2$, the finest-grained groups are clock regions (the same regions used in region equivalence) and each time advance $\delta$ appears in exactly one group. Hence, we have a finite number of groups of $\delta$, which are well ordered. (We can well order the groups by the largest $\delta$ in each group. Also note that $\delta \in \Rr^{\geq 0}$.)  

Let $\delta_{s}$ be a time advance in the smallest such group. Since the disjunct is satisfied at time $\delta_{s}$, there is some time $\delta_{p} < \delta_{s}, (l, \nu + \delta_{p}, \nu_{f} + \delta_{p}) \models \psi_1$. Let $\delta_{p}$ be the time when $\psi_1$ is satisfied (there may be many of these, but for this definition, we do not care.) Likewise, since $\delta_{s}$ is in the smallest such $\delta$ group, we know that all smaller $\delta$ (smaller than the group of $\delta$ values that $\delta_{s}$ is in) satisfy the left disjunct, meaning that $\tfoA{\delta'' < \delta_{s}}{(l, \nu + \delta'', \nu_{f} + \delta'') \models \psi_2}$. Since $\delta_{p} < \delta_{s}$ (and $\delta_{p}$ is smaller than any delta in the group of $\delta_{s}$), we have $(l,\nu + \delta_{p},\nu_{f} + \delta_{p}) \models \psi_1 \lgcand \psi_2 \lgcand \tfoA{\delta' < \delta_{p}}{(l,\nu + \delta', \nu_{f} + \delta') \models \psi_2}$. Using $\delta_{p}$ as the chosen delta for $\exists$, we have that $(l,\nu,\nu_{f}) \models \tfoEr{\psi_2}{\psi_1 \lgcand \psi_2}$. \parspc

Now we suppose that $(l,\nu,\nu_{f}) \models \tfoA{\psi_2} \lgcor \tfoEr{\psi_2}{\psi_1 \lgcand \psi_2}$. This direction is similar to the previous direction. We break the case on which disjunct is satisfied. \parspc

\noindent \textbf{Case 1:} $(l,\nu,\nu_{f}) \models \tfoAsh{\psi_2}$.

By definition, $(l,\nu,\nu_{f}) \models \tfoAr{\psi_1}{\psi_2}$. (The above case is the special case with $\psi_1 = \lgfalse$, which is harder to satisfy.) \parspc

\noindent \textbf{Case 2:} $(l,\nu,\nu_{f}) \models \tfoEr{\psi_2}{\psi_1 \lgcand \psi_2}$.

Let $\delta_{e}$ be the chosen time when $\psi_1 \lgcand \psi_2$ is true. Now we handle all time advances $\delta$. For all time advances $\delta > \delta_{e}$. Since $\psi_1$ is true, $\tfoAr{\psi_1}{\psi_2}$ is satisfied for those times. For all times $\delta < \delta_{e}$, since $\psi_2$ is true, $\tfoAr{\psi_1}{\psi_2}$ is satisfied for those times. When $\delta = \delta_{e}$, we need $\psi_2$ to be true, which it is.

Hence, for all time advances $\delta$, the definition of $\tfoAr{\psi_1}{\psi_2}$ is satisfied. Hence $(l,\nu,\nu_{f}) \models \tfoAr{\psi_1}{\psi_2}$.\qed
\end{proof}

\begin{proof}[Proof of Theorem \ref{thm:soundcomplete}] 
We use the soundness and completeness proofs of rules  in \cite{zhang-fast-generic-2005,zhang-fast-on-the-fly-2005}. Hence, we only need to prove the correctness of the proof rules we provided in this paper.We now prove the soundness and completeness of the $\tfoEr{\psi_1}{\psi_2}$  and the $\vee_{s}$. proof rules.

First, we start with the proof rule for $\vee_{s}$.

\textbf{Soundness:} Suppose this rule is true. Then $\psi_{s_1}$ acts as the placeholder $\psi_{s}$. Given that $z_{\infty} \subseteq (\psi_{s_1} \vee \psi_{s_2})$, we know that $\lgnot{\psi_s} \subseteq \psi_{s_2}$, since $\lgnot{\psi_{s}} = z_{\infty} - \psi_s$ by definition of complement.

\textbf{Completeness:} Suppose the conclusion is indeed true. Then by the completeness of $\vee_{s}$, we can use that rule. Choose $\psi_{s_1} = \psi_{s}$ and $\psi_{s_2} = \lgnot{\psi_{s}}$ By definition of $\lgnot{}$, $z_{\infty} \subseteq (\psi_{s_1} \lgcor \psi_{s_2}) = (\psi_{s} \lgcor \lgnot{\psi_{s}}) = z_{\infty}$. \parspc

Now we prove the correctness of the remaining $\tfoEr{\psi_1}{\psi_2}$ proof rules. We start with Rule $\exists_{r1}$.

\textbf{Soundness:} Suppose this rule is true. By the correctness of $\exists_{t_1}$ in \cite{zhang-fast-generic-2005,zhang-fast-on-the-fly-2005}, we know that $\Gamma \vdash \tfoEsh{\psi_2}$. We now need to argue that $\psi_1$ is true until $\psi_2$ is true. Examine the valuation set $succ(\Gamma) \cap \phi_s$. By construction, $succ(\Gamma)$ gives all possible valuations after some time advance from $\Gamma$. By the constraint $\Gamma \vdash pred(\phi_s)$, we know that $\Gamma$ must be able to time-lapse to each valuation in the clock zone $\phi_s$. Hence, $succ(\Gamma) \cap pred(\phi_s)$ is the set of valuations that elapses to $succ(\Gamma) \cap \phi_{s}$, and $succ(\Gamma) \cap pred_{<}(\phi_{s})$ is that set of valuations requiring some non-zero time elapse to $\phi_s$. From the truth of this second premise, $\psi_1$ is must be true for all of those times. 

\textbf{Completeness:} Suppose the conclusion is indeed true; suppose $\Gamma \vdash \tfoEr{\psi_1}{\psi_2}$.  Therefore, there is a time advance $t$ such that after elapsing $t$ units, $\psi_2$ is true, and for all times until (and not including $t$), $t$, $\psi_1$ is true. Choose $\phi_{s}$ to be a placeholder such that $succ(\Gamma) \wedge \phi_{s}$ is the sequent $\Gamma$ after $t$ units has elapsed. By the correctness of the proof for the rule $\exists_{t1}$ in \cite{zhang-fast-generic-2005,zhang-fast-on-the-fly-2005}, we know that $succ(\Gamma), \phi_{s} \vdash \psi_2$. Because for all times until the times $\phi_s$, $\psi_{1}$ is true, this set of times by definition is $succ(\Gamma) \wedge pred_{<}(\phi_s)$. Therefore, $succ(\Gamma) \wedge pred_{<}(\phi_s) \vdash \psi_{1}$ By definition of the time elapse, since this is true we know that $\Gamma \vdash pred(\phi_{s})$. \parspc

We now examine rule $\exists_{r2}$. Its proof of soundness and completeness is similar to $\exists_{r1}$. The difference is that we are elapsing from $\Gamma \cap \phi_{s}$. In $\exists_{t2}$, the third clause shrinks $\phi_{s}$ to ensure that the time-elapse relation holds.

\textbf{Soundness:} Suppose this rule is true. By the correctness of $\exists_{t2}$, the first and third premise show that $\tfoEsh{\psi_2}$ is true. Let that time advance be $\delta$ units.  We now need to argue that $\psi_1$ is true until $\psi_2$ is true. Examine the valuation set $succ(\Gamma) \cap \phi_{s'}$. By construction, $succ(\Gamma)$ gives all possible valuations after some time advance from $\Gamma$. By the constraint $\phi_{s} \vdash pred(\phi_{s'})$, we know that $\Gamma, \phi_{s}$ must be able to time-lapse to each valuation in the clock zone $\phi_{s'}$. Hence, $succ(\Gamma) \cap pred(\phi_{s'})$ is the set of valuations that elapses to $succ(\Gamma) \cap \phi_{s'}$, and $succ(\Gamma) \cap pred_{<}(\phi_{s'})$ is that set of valuations requiring some non-zero time elapse to $\phi_{s'}$. From the truth of this second premise, $\psi_1$ is must be true for all of those times. 

\textbf{Completeness:} Suppose the conclusion is indeed true; suppose $\Gamma, \phi_{s} \vdash \tfoEr{\psi_1}{\psi_2}$.  Therefore, there is a time advance $t$ such that after elapsing $t$ units, $\psi_2$ is true, and for all times until (and not including $t$) $t$, $\psi_1$ is true. Choose $\phi_{s'}$ to be a placeholder such that $succ(\Gamma) \wedge \phi_{s'}$ is the sequent $\Gamma, \phi_{s}$ after $t$ units has elapsed. By the correctness of the proof for the rule $\exists_{t2}$ in \cite{zhang-fast-generic-2005,zhang-fast-on-the-fly-2005}, we know that $succ(\Gamma), \phi_{s'} \vdash \psi_2$. Because for all times until the times $\phi_{s'}$, $\psi_{1}$ is true, this set of times by definition is $succ(\Gamma) \wedge pred_{<}(\phi_{s'})$. Therefore, $succ(\Gamma) \wedge pred_{<}(\phi_{s'}) \vdash \psi_{1}$ By definition of the time elapse, since this is true we know that $\phi_{s} \vdash pred(\phi_{s'})$. \qed
\end{proof}

\section{Derived Proof Rules}
\label{s:a:derivedrules}

\subsection{Derived $\tfoAr{\psi_1}{\psi_2}$ Rules}
\label{ss:a:forallderiv}

To derive an optimized proof for $\tfoAr{\psi_1}{\psi_2}$, we first will derive a slightly different version for $\tfoEr{\psi_2}{\psi_1 \lgcand \psi_2}$, rewriting the proof rule for this special case slightly. This version will allow us to get the same premise to appear twice in the proof.

The $\tfoEr{\psi_{1}}{\psi_{1} \lgcand \psi_{2}}$ can be slightly rewritten as follows:

\begin{prooftreepf}
\AxiomC{\framebox{\sqjudgement{succ(\sqpremta), \phi_{s_1}}{\psi_{1}}}}
\AxiomC{\framebox{\sqjudgement{succ(\sqpremta), pred(\phi_{s_1})}{\psi_{2}}}}
\AxiomC{\framebox{\sqjudgement{\phi_{\exists}}{pred(\phi_{s_1})}}}
\RightLabel{($\exists_{r2}$ rewrite)}
\TrinaryInfC{\sqjudgement{\sqpremta, \phi_{\exists}}{\tfoEr{\psi_{2}}{\psi_{1} \lgcand \psi_{2}}}}
\UnaryInfC{$\exists$ ph rewrite}
\end{prooftreepf}

Hence, we look more closely at the rewrite rule, comparing the derivation that is obtained, we get the following proof rule, $\exists_{rw}$:
\begin{prooftreepf}
\AxiomC{\sqjudgement{succ(\sqpremta), \phi_{s_1}}{\psi_{1}};
\sqjudgement{succ(\sqpremta), pred(\phi_{s_1})}{\psi_{2}}; \sqjudgement{\phi_{\exists}}{pred(\phi_{s_1})}}
\RightLabel{($\exists_{rw}$)}
\UnaryInfC{\sqjudgement{succ(\sqpremta), \phi_{s_1}}{\psi_{1} \lgcand \psi_{2}};
\sqjudgement{succ(\sqpremta), pred_{<}(\phi_{s_1})}{\psi_{2}};
\sqjudgement{\phi_{\exists}}{pred(\phi_{s_1})}}
\end{prooftreepf}

This rule has three conclusions that are written as three premises. 

\begin{lemma}
The $\exists_{rw}$ rule is sound and complete.
\label{lem:derivpr1}
\end{lemma}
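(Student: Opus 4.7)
The plan is to verify soundness and completeness by comparing the three premises and three conclusions pointwise. The third judgment $\phi_{\exists} \vdash pred(\phi_{s_1})$ appears verbatim on both sides, so it needs no argument, and the entire proof reduces to handling the first two judgments in each direction via a simple set-theoretic identity on clock-valuation regions.

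The key observation is the decomposition
\[ pred(\phi_{s_1}) \;=\; \phi_{s_1} \;\cup\; pred_{<}(\phi_{s_1}), \]
which follows directly from the semantics of the predecessor operators sketched earlier in the paper: $pred$ is defined by a time elapse of $\delta \geq 0$, while $pred_{<}$ requires $\delta > 0$, so the only valuations lying in $pred$ but not in $pred_{<}$ are precisely those already in $\phi_{s_1}$ (reached by the degenerate elapse $\delta = 0$). In particular, both $\phi_{s_1} \subseteq pred(\phi_{s_1})$ and $pred_{<}(\phi_{s_1}) \subseteq pred(\phi_{s_1})$.

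For soundness, I would assume the three premises hold. The first conclusion $succ(\Gamma), \phi_{s_1} \vdash \psi_{1} \lgcand \psi_{2}$ follows by combining premise~1 (which gives $\psi_{1}$ on $succ(\Gamma) \cap \phi_{s_1}$) with premise~2 restricted to the subset $\phi_{s_1} \subseteq pred(\phi_{s_1})$ (which gives $\psi_{2}$ on the same set). The second conclusion $succ(\Gamma), pred_{<}(\phi_{s_1}) \vdash \psi_{2}$ follows by restricting premise~2 to the smaller set $pred_{<}(\phi_{s_1}) \subseteq pred(\phi_{s_1})$.

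For completeness, I would assume the three conclusions. Premise~1 is immediate from the $\psi_{1}$ conjunct of conclusion~1. Premise~2, which demands $\psi_{2}$ on $succ(\Gamma), pred(\phi_{s_1})$, follows by splitting along the decomposition above: on $\phi_{s_1}$ it is supplied by the $\psi_{2}$ conjunct of conclusion~1, and on $pred_{<}(\phi_{s_1})$ it is exactly conclusion~2; the union of these two parts is all of $pred(\phi_{s_1})$. The main (and only) obstacle is cleanly justifying the identity $pred(\phi_{s_1}) = \phi_{s_1} \cup pred_{<}(\phi_{s_1})$ from the informal definitions of the operators in the main text; once that set identity is recorded, everything else is a trivial restriction or union of already-established judgments.
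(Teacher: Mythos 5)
Your proposal is correct and follows essentially the same route as the paper's own proof: the third judgment is identical on both sides, soundness is obtained by restricting premise~2 along the inclusions $\phi_{s_1} \subseteq pred(\phi_{s_1})$ and $pred_{<}(\phi_{s_1}) \subseteq pred(\phi_{s_1})$, and completeness by the union decomposition $pred(\phi_{s_1}) = \phi_{s_1} \cup pred_{<}(\phi_{s_1})$. Your version is in fact slightly cleaner in that it states this decomposition explicitly up front and grounds it in the $\delta \geq 0$ versus $\delta > 0$ semantics of the two predecessor operators, which the paper leaves implicit.
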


\begin{proof}[Proof of Lemma \ref{lem:derivpr1}]

\textbf{Soundness:} Assume that the three top premises are true. Since the third conclusion is the same as the third premise, that is true. Now we must argue that $\sqjudgement{succ(\sqpremta), \phi_{s_1}}{\psi_{1} \lgcand \psi_{2}}$ and $\sqjudgement{succ(\sqpremta), pred_{<}(\phi_{s_1})}{\psi_{2}}$. Since we have $\sqjudgement{succ(\sqpremta), pred(\phi_{s_1})}{\psi_{s_2}}$ and $pred_{<}(phi_{s_1}) \subseteq pred_{<}(phi_{s_1})$, we know $\sqjudgement{succ(\sqpremta), pred(\phi_{s_1})}{\psi_{s_2}}$. Furthermore, since  $\phi_{s_1} \subseteq pred(\phi_{s_1})$, $\sqjudgement{succ(\sqpremta), \phi_{s_1}}{\psi_{2}}$. Therefore, $\sqjudgement{succ(\sqpremta), \phi_{s_1}}{\psi_{1} \lgcand \psi_{2}}$.

\textbf{Completeness:} Assume that the three bottom conclusions are true. Since the third premise is the same as the third conclusion, that is true. Now we must argue that $\sqjudgement{succ(\sqpremta), \phi_{s_1}}{\psi_{1}}$ and $\sqjudgement{succ(\sqpremta), pred(\phi_{s_1})}{\psi_{2}}$. Since we have $\sqjudgement{succ(\sqpremta), \phi_{s_1}}{\psi_1 \lgcand \psi_2}$, we have $\sqjudgement{succ(\sqpremta), \phi_{s_1}}{\psi_1}$. Furthermore, since $\phi_{s_1} \cup pred_{<}(\phi_{s_1}) = pred(\phi_{s_1})$, we know that $\sqjudgement{succ(\sqpremta), pred(\phi_{s_1})}{\psi_{s_2}}$. \qed
\end{proof}

Now, with the rule $\exists_{rw}$, we utilize the formulation in Lemma \ref{lem:deriv} to derive a proof for $\tfoAr{\psi_1}{\psi_2}$. Here is the derivation for $\tfoAr{\psi_1}{\psi_2}$.

\begin{prooftreepf}
\AxiomC{\framebox{\sqjudgement{succ(\sqpremta), \phi_{s_1}}{\psi_{1}}}}
\AxiomC{\framebox{\sqjudgement{succ(\sqpremta), pred(\phi_{s_1})}{\psi_{2}}}}
\AxiomC{\framebox{\sqjudgement{\phi_{\exists}}{pred(\phi_{s_1})}}}
\RightLabel{($\exists_{rw}$)}
\TrinaryInfC{\sqjudgement{succ(\sqpremta), \phi_{s_1}}{\psi_{1} \lgcand \psi_{2}};
\sqjudgement{succ(\sqpremta), pred_{<}(\phi_{s_1})}{\psi_{2}};
\sqjudgement{\phi_{\exists}}{pred(\phi_{s_1})}}
\RightLabel{($\exists_{r2}$)}
\UnaryInfC{\sqjudgement{\sqpremta, \phi_{\exists}}{\tfoEr{\psi_{2}}{\psi_{1} \lgcand \psi_{2}}}}
\UnaryInfC{$\exists$ ph}
\end{prooftreepf}

\begin{prooftreepf}
\AxiomC{\framebox{\sqjudgement{succ(\sqpremta), \phi_{s_2}}{\psi_{2}}}}
\AxiomC{\framebox{\sqjudgement{succ(\sqpremta, \phi_{\forall})}{succ(\sqpremta) \lgcand \phi_{s_2}}}}
\RightLabel{($\forall_{t2}$)}
\BinaryInfC{\sqjudgement{\sqpremta, \phi_{\forall}}{\tfoAsh{\psi_{2}}}}
\UnaryInfC{$\forall$ ph}
\end{prooftreepf}

\begin{prooftreepf}
\AxiomC{See $\exists$ ph}
\UnaryInfC{\sqjudgement{\sqpremta, \phi_{\exists}}{\tfoEr{\psi_{2}}{\psi_{1} \lgcand \psi_{2}}}}
\AxiomC{See $\forall$ ph}
\UnaryInfC{\sqjudgement{\sqpremta, \phi_{\forall}}{\tfoAsh{\psi_{2}}}}
\AxiomC{\framebox{\sqjudgement{\sqpremta}{ \phi_{\exists} \lgcor \phi_{\forall}}}}
\RightLabel{($\vee_{s_r}$)}
\TrinaryInfC{\sqjudgement{\sqpremta}{\tfoEr{\psi_2}{\psi_{1} \lgcand \psi_{2}} \lgcor \tfoAsh{\psi_{2}}}}
\RightLabel{Lemma 5.1}
\UnaryInfC{\sqjudgement{\sqpremta}{\tfoAr{\psi_1}{\psi_{2}}}}
\end{prooftreepf}

We stop the derivation here and examine the derived sequents. Notice that we are computing a placeholder for $\psi_{2}$ twice. We can perform this computation \textbf{once} and save ourselves a good amount of computation time. We can compute things in this order:

\begin{enumerate}
	\item Find the placeholder for $\psi_{1}$. Utilize simpler proof rules if $\psi_{1}$ is one of the easier cases.
	\item Find the placeholder for $\psi_{2}$. (Copy this value for the other branch.) Now using this, solve the $\forall_{t2}$ rule to obtain a placeholder $\phi_{\forall}$. 
	\item After solving $\phi_{\forall}$, use the solved $\psi_{2}$ placeholder to solve for $\phi_{\exists}$. (This is the hard step that yields the optimization.)
	\item Now solve the $\lgcor_{s}$ placeholder rule.
\end{enumerate}

Using this insight, we now give the optimized proof rules. These rules are:

\begin{prooftreepf}
\AxiomC{\sqjudgement{\sqpremta}{\tfoAsh{\psi_{2}}}}
\RightLabel{($\forall_{ro1}$)}
\UnaryInfC{\sqjudgement{\sqpremta}{\tfoAr{\psi_1}{\psi_2}}}
\end{prooftreepf} 

\begin{prooftreepf}
\AxiomC{\sqjudgement{\sqpremta}{\psi_{1} \lgcand \psi_2}}
\RightLabel{($\forall_{ro2}$)}
\UnaryInfC{\sqjudgement{\sqpremta}{\tfoAr{\psi_1}{\psi_2}}}
\end{prooftreepf}

\begin{prooftreepf}
\AxiomC{\sqjudgement{succ(\sqpremta), \phi_{s_1}}{\psi_{1}}}
\noLine
\UnaryInfC{\sqjudgement{succ(\sqpremta), \phi_{s_2}}{\psi_{2}}}
\noLine
\UnaryInfC{\sqjudgement{succ(\sqpremta), pred(\phi_{s_1})}{succ(\sqpremta), \phi_{s_2}}}
\AxiomC{\sqjudgement{\phi_{\exists}}{pred(\phi_{s_1})}}
\noLine
\UnaryInfC{\sqjudgement{succ(\sqpremta, \phi_{\forall})}{succ(\sqpremta) \lgcand \phi_{s_2}}}
\noLine
\UnaryInfC{\sqjudgement{\sqpremta}{ \phi_{\exists} \lgcor \phi_{\forall}}}
\RightLabel{($\forall_{ro3}$)}
\BinaryInfC{\sqjudgement{\sqpremta}{\tfoAr{\psi_1}{\psi_{2}}}}
\end{prooftreepf}

The first proof rule is used when $\psi_{1}$ is never satisfied for any time advance. Such as case is when $\psi_{1}$ is a false atomic proposition. The second proof rule is when $\psi_{1}$ is immediately true without a time advance, such as a true atomic proposition. The third proof rule uses the expanded derived proof rule to enforce that $\psi_{2}$ is only computed once. The premise $pred(\phi_{s_1}) \subseteq \phi_{s_2}$ ensures that all of the predecessor of $\phi_1$ satisfies $\psi_{2}$. The last premise of ($\forall$ ph1) is mostly solved by checking that $pred(\phi_{s_1}) \subseteq \phi_{s_2}$; we require the intersection with the successor for completeness, since we need not require times before $\sqpremta$ to satisfy $\psi_{2}$.

If a placeholder is involved, we use the following analogous optimized proof rules:

\begin{prooftreepf}
\AxiomC{\sqjudgement{\sqpremta, \phi_{s}}{\tfoAsh{\psi_{2}}}}
\RightLabel{($\forall_{rop1}$)}
\UnaryInfC{\sqjudgement{\sqpremta, \phi_{s}}{\tfoAr{\psi_1}{\psi_2}}}
\end{prooftreepf} 

\begin{prooftreepf}
\AxiomC{\sqjudgement{\sqpremta, \phi_{s}}{\psi_{1} \lgcand \psi_2}}
\RightLabel{($\forall_{rop2}$)}
\UnaryInfC{\sqjudgement{\sqpremta, \phi_{s}}{\tfoAr{\psi_1}{\psi_2}}}
\end{prooftreepf} 

\begin{prooftreepf}
\AxiomC{\sqjudgement{succ(\sqpremta), \phi_{s_1}}{\psi_{1}}}
\noLine
\UnaryInfC{\sqjudgement{succ(\sqpremta), \phi_{s_2}}{\psi_{2}}}
\noLine
\UnaryInfC{\sqjudgement{succ(\sqpremta), pred(\phi_{s_1})}{succ(\sqpremta), \phi_{s_2}}}
\AxiomC{\sqjudgement{\phi_{\exists}}{pred(\phi_{s_1})}}
\noLine
\UnaryInfC{\sqjudgement{succ(\sqpremta, \phi_{\forall})}{succ(\sqpremta) \lgcand \phi_{s_2}}}
\noLine
\UnaryInfC{\sqjudgement{\sqpremta, \phi_{s}}{ \phi_{\exists} \lgcor \phi_{\forall}}}
\RightLabel{($\forall_{rop3}$)}
\BinaryInfC{\sqjudgement{\sqpremta, \phi_{s}}{\tfoAr{\psi_1}{\psi_{2}}}}
\end{prooftreepf}

Notice that due to the fresh placeholder generated by $\vee_{s_r}$, that the placeholder rule is similar to the rule without the placeholder.

Also notice that the implementation complexity of $\exists_{r2}$ is placed in the third premise of ($\forall$ ph1):  \sqjudgement{succ(\sqpremta), pred(\phi_{s_1})}{succ(\sqpremta), \phi_{s_2}}. The catch is to make $\phi_{s_1}$ as large as possible such that all the proof rules go through. yet ensuring that all of $pred(\phi_{s_1})$ satisfies $\psi_{2}$.

\begin{lemma}
The optimized proof rules for $\tfoAr{\psi_1}{\psi_2}$ are sound and complete.
\label{lem:derivforall2}
\end{lemma}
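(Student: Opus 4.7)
The plan is to reduce correctness of the three optimized rules (and their placeholder analogues) to results already in hand: Lemma \ref{lem:deriv} for the equivalence between $\tfoAr{\psi_1}{\psi_2}$ and $\tfoAsh{\psi_2} \lgcor \tfoEr{\psi_2}{\psi_1 \lgcand \psi_2}$, Lemma \ref{lem:derivpr1} for the $\exists_{rw}$ rewrite, and Theorem \ref{thm:soundcomplete} for the base rules $\forall_{t2}$, $\exists_{r2}$, and $\vee_s$. Each optimized rule will be shown to be a sound and complete shortcut for one case of the derivation sketched in Section \ref{ss:a:forallderiv}, so the lemma follows by combining these pieces.

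For $\forall_{ro1}$, observe that $\tfoAsh{\psi_2}$ is the special case of $\tfoAr{\psi_1}{\psi_2}$ obtained by taking $\psi_1$ to be vacuously false: both directions of the rule follow immediately from Lemma \ref{lem:deriv}, since the first disjunct of the equivalence already witnesses $\tfoAr{\psi_1}{\psi_2}$. For $\forall_{ro2}$, suppose $\sqpremta \vdash \psi_1 \lgcand \psi_2$. For any time advance $\delta \geq 0$, either $\delta = 0$ and $\psi_2$ holds at $\sqpremta$, or $\delta > 0$ in which case $\delta' = 0 < \delta$ witnesses $\psi_1$; this matches the semantic definition of $\tfoAr{\psi_1}{\psi_2}$. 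Neither rule is complete on its own: each premise is sufficient but not necessary, and overall completeness is ensured by $\forall_{ro3}$ covering the remaining cases.

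The main work lies with $\forall_{ro3}$. For soundness, assume the three premises hold with placeholders $\phi_{s_1}, \phi_{s_2}, \phi_{\exists}, \phi_{\forall}$. The $\psi_1$-premise on $\phi_{s_1}$, the containment $succ(\sqpremta) \cap pred(\phi_{s_1}) \subseteq succ(\sqpremta) \cap \phi_{s_2}$ composed with the $\psi_2$-premise on $\phi_{s_2}$, and the containment $\phi_{\exists} \subseteq pred(\phi_{s_1})$ together form a valid $\exists_{rw}$ derivation, which by Lemma \ref{lem:derivpr1} yields a valid $\exists_{r2}$ derivation of $\tfoEr{\psi_2}{\psi_1 \lgcand \psi_2}$ with placeholder $\phi_{\exists}$. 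Symmetrically, the $\psi_2$-premise together with the $\phi_{\forall}$-premise gives a valid $\forall_{t2}$ derivation of $\tfoAsh{\psi_2}$ with placeholder $\phi_{\forall}$. The disjunction check $\sqpremta \vdash \phi_{\exists} \lgcor \phi_{\forall}$ is exactly the $\vee_s$ premise needed to combine the two results into $\tfoAsh{\psi_2} \lgcor \tfoEr{\psi_2}{\psi_1 \lgcand \psi_2}$, which is $\tfoAr{\psi_1}{\psi_2}$ by Lemma \ref{lem:deriv}.

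The hard part will be completeness of $\forall_{ro3}$. Assume $\sqpremta \vdash \tfoAr{\psi_1}{\psi_2}$. By Lemma \ref{lem:deriv} and completeness of $\vee_s$, $\forall_{t2}$, $\exists_{r2}$, and $\exists_{rw}$, we obtain candidate placeholders for $\psi_2$ arising independently from the $\tfoAsh{\psi_2}$ branch and from the $\tfoEr{\psi_2}{\psi_1 \lgcand \psi_2}$ branch. The optimization demands that these be collapsed into a single $\phi_{s_2}$, which we take to be the largest placeholder such that $succ(\sqpremta), \phi_{s_2} \vdash \psi_2$; by monotonicity any placeholder produced by either branch is contained in this maximal choice, so substituting it preserves validity of both derivations. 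We then choose $\phi_{s_1}$ maximal under the constraint $succ(\sqpremta) \cap pred(\phi_{s_1}) \subseteq succ(\sqpremta) \cap \phi_{s_2}$ (so $\psi_2$ indeed holds on the predecessor zone required by $\exists_{rw}$), and take $\phi_{\exists}$ and $\phi_{\forall}$ to be the largest placeholders produced by the completeness of $\exists_{r2}$ and $\forall_{t2}$ under these choices. The crux is verifying that with these maximal placeholders the disjunction $\phi_{\exists} \lgcor \phi_{\forall}$ still covers $\sqpremta$; this follows because every state at $\sqpremta$ satisfying $\tfoAr{\psi_1}{\psi_2}$ falls, by Lemma \ref{lem:deriv}, into one of the two original disjuncts and therefore into the corresponding maximal placeholder. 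The placeholder variants $\forall_{rop1}, \forall_{rop2}, \forall_{rop3}$ are handled by the same arguments after relativizing the current sequent to $\sqpremta, \phi_s$ and invoking the placeholder versions of the underlying base rules.
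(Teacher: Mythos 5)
Your proposal is correct and follows essentially the same route as the paper's proof: both reduce the optimized rules to the derivation built from Lemma \ref{lem:deriv}, the $\exists_{rw}$ rewrite, $\forall_{t2}$, and $\vee_{s}$, and both establish completeness of $\forall_{ro3}$ by choosing the maximal placeholder for $\psi_2$ so that the shared premise $succ(\sqpremta), pred(\phi_{s_1}) \vdash succ(\sqpremta), \phi_{s_2}$ holds. Your semantic justification of $\forall_{ro2}$ (splitting on $\delta = 0$ versus $\delta > 0$ with witness $\delta' = 0$) is in fact spelled out more carefully than in the paper's version.
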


\begin{proof}[Proof of Lemma \ref{lem:derivforall2}]
Given the correctness of $\vee_{s_r}$, the soundness and completeness of the proof rules with and without placeholders are the same. Hence, we only give the soundness and completeness proof rules when placeholders are not used. Note that the soundness and completeness heavily depends on the derivation and the correctness of $\exists_{rw}$; we only need argue the changes to the derivation.

\textbf{Soundness:} The soundness of $\forall_{ro1}$ and $\forall_{ro2}$ follow from the definition of $\tfoAr{\psi_1}{\psi_2}$. In the first case, $\psi_2$ is always true (relativization not needed) and in the second rule, $\psi_1$ is immediately satisfied. Hence we examine rule $\forall_{ro3}$.
Suppose that the premises are true. Given that the derived proof rule is correct, we only need show that the proof rule \sqjudgement{succ(\sqpremta, \phi_{\forall})}{succ(\sqpremta) \lgcand \phi_{s_2}} results in the correct placeholders $\psi_{s_1}$ and $\phi_{s_2}$. We then invoke the soundness of the derived proof. Since we have \sqjudgement{succ(\sqpremta), pred(\phi_{s_1})}{succ(\sqpremta), \phi_{s_2}}, we know that by definition of $\subseteq$ and $\sqturn$ and the premise $\sqjudgement{succ(\sqpremta), \phi_{s_2}}{\psi_{s_2}}$ that $\sqjudgement{succ(\sqpremta), pred(\phi_{s_1})}{\psi_{2}}$. Now we have all the premises from the derived rule (including the rule $\exists_{rw}$). Hence, the proof is sound.

\textbf{Completeness:} The proof rules $\forall_{ro1}$ and $\forall_{ro2}$ cover the corner cases when $\phi_{s_1}$ is either empty or all possible clock values. We now address the completeness using rule $\forall_{ro3}$. Suppose that the conclusion is true; that \sqjudgement{\sqpremta}{\tfoAr{\psi_1}{\psi_2}}. From the soundness and completeness of the derived proof rules, the only premise that is different is \sqjudgement{succ(\sqpremta, \phi_{\forall})}{succ(\sqpremta) \lgcand \phi_{s_2}}. Using the derived proof rules (including rule $\exists_{rw}$), we get two placeholders based on $\psi_{s_2}$: $\sqjudgement{succ(\sqpremta), pred(\phi_{s_1})}{\psi_{s_2}}$ and $\sqjudgement{succ(\sqpremta), \phi_{s_2}}{\psi_{s_2}}$. Using soundness and completeness, choose $\phi_{s_2}$ to be the largest such placeholder (since the proof rules can be solved exactly with unions of clock zones, one such largest placeholder exists). Since $\phi_{s_1}$ and $\phi_{s_2}$ are clock constraints (unions of clock zones independent of location), The choice of $pred(\phi_{s_1})$ and $\phi_{s_2}$ does not change the discrete state, and only the clock state. Since both clock states are contained in the set of clock states that satisfy $\psi_{2}$, both depend on clock constraints. Since $\phi_{s_2}$ was chosen to be the largest possible such placeholder when intersection with $succ(\sqpremta)$, we know  \sqjudgement{succ(\sqpremta), pred(\phi_{s_1})}{succ(\sqpremta), \phi_{s_2}}. (Note that the proof requires the intersection with $succ(\sqpremta)$ on both sides.) Hence, we have found placeholders that satisfy all the premises of this proof rule. \qed
\end{proof}

\subsection{Derived Rules Involving Invariants}
\label{ss:a:invderiv}

We represent an invariant with $Inv$. To handle invariants, we add them to the formula (similar to how invariants are handled when they are converted to a PES in \cite{zhang-fast-generic-2005,zhang-fast-on-the-fly-2005}). We incorporate invariants into MES as follows:
$\tfoEsh{\psi} \text{ becomes } \tfoEsh{Inv \lgcand \psi}$ and
$\tfoAsh{\psi} \text{ becomes } \tfoAsh{\neg Inv \lgcor \psi}$.
These encodings require that the invariants are \emph{past closed}: if the invariant is true at some time, then the invariant must be true at all previous times. Full derived proof rules guiding the implementation of invariants for the time advance operators discussed in the remainder of this Section. Using these derived proof rules, we can reduce computation by specializing the proof tree by substituting in $Inv$ (or $\neg Inv$) for the relevant placeholders. Since we know the value of $Inv$, which is a specific clock constraint, rather than using the general-purpose rules to solve for the placeholders, we input in these values and \emph{specialize} the rules.  Invariants checked in the satisfaction of action operators $\mmEX{\psi}$ and $\mmAX{\psi}$ are handled as guards are handled in \cite{zhang-fast-generic-2005,zhang-fast-on-the-fly-2005}.

To optimize the implementation of invariants, we derive the rules that are formed when using invariants. Here, we let $Inv$ represent the invariant. Now using the invariant, we derive the proof rules:

\begin{prooftreepf}
\AxiomC{\framebox{\sqjudgement{succ(\sqpremta), \phi_{s}}{Inv}}}
\AxiomC{\sqjudgement{succ(\sqpremta), \phi_{s}}{\psi}}
\RightLabel{($\wedge$)}
\BinaryInfC{\sqjudgement{succ(\sqpremta), \phi_{s}}{Inv \lgcand \psi}}
\AxiomC{\sqjudgement{\sqpremta}{\phi_{s}}}
\RightLabel{($\exists_{t1}$)}
\BinaryInfC{\sqjudgement{\sqpremta}{\tfoEsh{Inv \lgcand \psi}}}
\end{prooftreepf}

From this derivation, we know that for $\exists$, the proper thing to do concerning invariants is to intersect the invariants with the placeholder $\phi_{s}$ and not with $succ(\sqpremta)$.

\begin{prooftreepf}
\AxiomC{\framebox{\sqjudgement{succ(\sqpremta), \phi_{s}}{Inv}}}
\AxiomC{\sqjudgement{succ(\sqpremta), \phi_{s}}{\psi}}
\RightLabel{($\wedge$)}
\BinaryInfC{\sqjudgement{succ(\sqpremta), \phi_{s}}{Inv \lgcand \psi}}
\AxiomC{\sqjudgement{\sqpremta, \phi_{\exists}}{\phi_{s}}}
\RightLabel{($\exists_{t2}$)}
\BinaryInfC{\sqjudgement{\sqpremta, \phi_{\exists}}{\tfoEsh{Inv \lgcand \psi}}}
\end{prooftreepf}

This above derivation for $\exists$ is similar to the one without the placeholder. Hence, we know to intersect the invariants with the placeholder $\phi_{s}$ and not with $succ(\sqpremta)$.

Now we derive the invariant for $\forall$:

\begin{prooftreepf}
\AxiomC{\sqjudgement{succ(\sqpremta), \phi_{s}}{\neg Inv}}
\AxiomC{\sqjudgement{succ(\sqpremta), \neg \phi_{s}}{\psi}}
\RightLabel{($\vee_{c}$)}
\BinaryInfC{\sqjudgement{succ(\sqpremta)}{\neg Inv \lgcor \psi}}
\RightLabel{($\forall_{t1}$)}
\UnaryInfC{\sqjudgement{\sqpremta}{\tfoAsh{\neg Inv \lgcor \psi}}}
\end{prooftreepf}

Since $\phi_{s} = \neg Inv$, this rule reduces to:

\begin{prooftreepf}
\AxiomC{\sqjudgement{succ(\sqpremta), Inv}{\psi}}
\RightLabel{($\vee_{c}$, derived)}
\UnaryInfC{\sqjudgement{succ(\sqpremta)}{\neg Inv \lgcor \psi}}
\RightLabel{($\forall_{t1}$)}
\UnaryInfC{\sqjudgement{\sqpremta}{\tfoAsh{\neg Inv \lgcor \psi}}}
\end{prooftreepf}

From this derivation, since there is no placeholder, we intersect the invariant $Inv$ with $succ(\sqpremta)$. Now we consider $\forall$ with a placeholder:

\begin{prooftreepf}
\AxiomC{\sqjudgement{succ(\sqpremta), \phi_{s}}{\neg Inv}}
\AxiomC{\sqjudgement{succ(\sqpremta), \phi_{\vee} - \phi_{s}}{\psi}}
\RightLabel{($\vee_{c}$)}
\BinaryInfC{\sqjudgement{succ(\sqpremta), \phi_{\vee}}{\neg Inv \lgcor \psi}}
\UnaryInfC{$\forall_{t2}$ inv1}
\end{prooftreepf}

\begin{prooftreepf}
\AxiomC{see $\forall_{t2}$ inv1}
\UnaryInfC{\sqjudgement{succ(\sqpremta), \phi_{\vee}}{\neg Inv \lgcor \psi}}
\AxiomC{\sqjudgement{succ((\sqpremta, \phi_{\forall}))}{succ(\sqpremta) \lgcand \phi_{\vee}}}
\RightLabel{($\forall_{t2}$)}
\BinaryInfC{\sqjudgement{\sqpremta, \phi_{\forall}}{\tfoAsh{\neg Inv \lgcor \psi}}}
\end{prooftreepf}

This derivation is more complex, due to the placeholder in the $\forall$. Here we use the version of $\vee_{c}$ that uses the complement of the placeholder. After doing some set operations and solving the left $(\phi_{s} = \neg Inv)$, we get the cleaned up version of $\forall$ with a placeholder:

\begin{prooftreepf}
\AxiomC{\framebox{\sqjudgement{succ(\sqpremta), Inv, \phi_{\vee}}{\psi}}}
\RightLabel{($\vee_{c}$ derived)}
\UnaryInfC{\sqjudgement{succ(\sqpremta), \phi_{\vee}}{\neg Inv \lgcor \psi}}
\AxiomC{\sqjudgement{succ((\sqpremta, \phi_{\forall}))}{succ(\sqpremta) \lgcand \phi_{\vee}}}
\RightLabel{($\forall_{t2}$)}
\BinaryInfC{\sqjudgement{\sqpremta, \phi_{\forall}}{\tfoAsh{\neg Inv \lgcor \psi}}}
\end{prooftreepf}

This means that the Invariant is not part of the placeholder, $\phi_{\vee}$. Hence, we still intersect $Inv$ with $succ(\Gamma)$, and not the placeholder, and we might to allow valuations that do not satisfy the invariant in the placeholder $\phi_{\vee}$. Furthermore, to get the largest placeholder $\phi_{\vee}$, we have to include all of $\neg Inv$ in $\phi_{v}$. This means that to get all possible valuations for the placeholders, we union the complement of the invariant ($\neg Inv$) with the placeholder $\phi_{\vee}$ and then use $\phi_{v}$ to find $\phi_{\forall}$.

To illustrate this point, we derive the $\forall$ rule with a placeholder using the $\vee_{s}$ proof rule from this paper. The alternative derivation is: 

\begin{prooftreepf}
\AxiomC{\sqjudgement{succ(\sqpremta), \phi_{s_1}}{\neg Inv}}
\AxiomC{\sqjudgement{succ(\sqpremta), \phi_{s_2}}{\psi}}
\AxiomC{\sqjudgement{succ(\sqpremta), \phi_{\vee}}{\phi_{s_1} \lgcor \phi_{s_2}}}
\RightLabel{($\vee_{s}$)}
\TrinaryInfC{\sqjudgement{succ(\sqpremta), \phi_{\vee}}{\neg Inv \lgcor \psi}}
\UnaryInfC{Inv2}
\end{prooftreepf}

\begin{prooftreepf}
\AxiomC{see Inv2}
\UnaryInfC{\sqjudgement{succ(\sqpremta), \phi_{\vee}}{\neg Inv \lgcor \psi}}
\AxiomC{\sqjudgement{succ((\sqpremta, \phi_{\forall}))}{succ(\sqpremta) \lgcand \phi_{\vee}}}
\RightLabel{($\forall_{t2}$)}
\BinaryInfC{\sqjudgement{\sqpremta, \phi_{\forall}}{\tfoAsh{\neg Inv \lgcor \psi}}}
\end{prooftreepf}

Using that $\phi_{s_1} = \neg Inv$, we get the cleaned up rule for $\forall$ with a placeholder:

\begin{prooftreepf}
\AxiomC{\framebox{\sqjudgement{succ(\sqpremta), \phi_{s}}{\psi}}}
\AxiomC{\framebox{\sqjudgement{succ(\sqpremta), \phi_{\vee}}{\phi_{s} \lgcor \neg Inv}}}
\RightLabel{($\vee_{s}$ derived)}
\BinaryInfC{\sqjudgement{succ(\sqpremta), \phi_{\vee}}{\neg Inv \lgcor \psi}}
\UnaryInfC{Inv3}
\end{prooftreepf}

\begin{prooftreepf}
\AxiomC{see Inv3}
\UnaryInfC{\sqjudgement{succ(\sqpremta), \phi_{\vee}}{\neg Inv \lgcor \psi}}
\AxiomC{\sqjudgement{succ((\sqpremta, \phi_{\forall}))}{succ(\sqpremta) \lgcand \phi_{\vee}}}
\RightLabel{($\forall_{t2}$)}
\BinaryInfC{\sqjudgement{\sqpremta, \phi_{\forall}}{\tfoAsh{\neg Inv \lgcor \psi}}}
\end{prooftreepf}

which is the same as the previous derivation using the rule $\vee_{c}$.

How we included the invariant $Inv$ in the proof rules depends on the definition of the time advance operators. Also note that these uses of $Inv$ require that $Inv$ is \textbf{past closed}.
Note that $Inv \lgcif \psi$ is equivalent to $\neg Inv \lgcor \psi$; since $\lgcif$ is not fully supported, we encode the invariant with $(\neg Inv) \lgcor \psi$ and use the derived results to handle the negation over the invariant.

\subsection{Simplified TCTL Formulas}
\label{ss:a:simplertctl}

If $\phi_1$ is an atomic proposition, conjunction, or disjunction of them (it has no fixpoint variables, transitions, time advances or clock constraints), the the relativized formulas can be simplified. Let the conjunction and disjunctive constraint (normal form not required) of atomic propositions be $p_{p}$. We can construct $p_{p}$ with the following grammar:  
\begin{align}
p_{p} ::= p \barsep \lgnot p \barsep \lgtrue \barsep \lgfalse \barsep p_{p} \lgcand p_{p} \barsep p_{p} \lgcor p_{p} \label{eq:gram}
\end{align}
where $p \in \pwmset{L}$ is an atomic proposition.

We represent such atomic literals with $p$ and $q$. If we only consider subformulas with this specified grammar, we also give simplified formulas for common TCTL operators 

\begin{theorem} Let $p$ and $q$ be a combinations of conjunctions and disjunctions of atomic propositions constructed using Equation \ref{eq:gram}. Then we have the following simplified TCTL formulas:
\begin{align}
\ctlAG{p} \equiv \ &Y \mesnueq p \lgcand \tfoAsh{\mmAX{Y}} \\
\ctlAF{p} \equiv \ &Y \mesmueq p \lgcor \Bigl(\tfoAsh{\mmAX{Y}} \lgcand \tfoEsh{\tfreeze{z}{\tfoAsh{z < 1}}}\Bigr) \\
\ctlEF{p} \equiv \ &Y \mesmueq p \lgcor \tfoEsh{\mmEX{Y}} \\
\ctlEG{p} \equiv \ &Y \mesnueq p \lgcand \Bigl(\tfoEsh{\mmEX{Y}} \lgcor \tfoAsh{\tfreeze{z}{\tfoEsh{z \geq 1}}}\Bigr)\\
\ctlAG{p \lgcif \ctlAF{q}} \equiv \ & Y \mesnueq (\neg p) \lgcor \tfoAsh{Y_2 \lgcand \mmAX{Y}}\nonumber \\ 
											 & Y_2 \mesmueq q \lgcor (\tfoAsh{\mmAX{Y_2}} \lgcand \tfoEsh{\tfreeze{z}{\tfoAsh{z < 1}}}) 
\end{align}
\label{lem:simpltctl}
\end{theorem}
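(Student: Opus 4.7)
The plan is to prove each of the five equivalences via standard fixpoint induction (Knaster-Tarski), showing in each case that the set of TCTL-satisfying extended states is the intended parity fixpoint of the monotone functional induced by the right-hand MES. I would organize the work into three groups of increasing difficulty. For the first group ($\ctlAG{p}$ and $\ctlEF{p}$), I would reuse the classical CTL fixpoint characterizations, noting that in the timed setting a CTL ``next'' modality unfolds into an arbitrary time elapse followed by an action transition, so $\tfoAsh{\mmAX{Y}}$ and $\tfoEsh{\mmEX{Y}}$ are the right analogues. Since $p$ is a Boolean combination of atomic propositions (the grammar in Equation \ref{eq:gram}) it is location-determined and time-insensitive, so one unfolding of the MES exactly reproduces the TCTL semantics, and a Knaster-Tarski argument over the complete lattice of extended state sets pins down the correct parity.

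For the second group ($\ctlAF{p}$ and $\ctlEG{p}$), the additional challenge is the time-divergence conjuncts $\tfoEsh{\tfreeze{z}{\tfoAsh{z<1}}}$ and $\tfoAsh{\tfreeze{z}{\tfoEsh{z \geq 1}}}$, which exist precisely to rule out Zeno behavior. Under the timelock-free, non-Zeno assumption I would prove a progress lemma: a state satisfies $\tfoEsh{\tfreeze{z}{\tfoAsh{z<1}}}$ exactly when some future point of the current location sojourn forces an action within one unit of time, and along any time-divergent infinite run such states are encountered infinitely often. This lemma lets me align the least-fixpoint approximants for $\ctlAF{p}$ with TCTL's quantification over time-divergent runs (each approximant corresponds to reaching $p$ via a bounded number of action-forcing time-elapse steps), and the dual argument handles $\ctlEG{p}$ as a greatest fixpoint witnessing an infinite time-progressing trace on which $p$ remains true.

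For the third group, the lead-to formula $\ctlAG{p \lgcif \ctlAF{q}}$, I would show that the inner equation denotes exactly $\ctlAF{q}$ (its right-hand side depends only on $q$ and on $Y_2$, not on $Y$, so the second group's result applies directly), then substitute and use $p \lgcif \varphi \equiv \lgnot p \lgcor \varphi$ together with the first group's $\ctlAG$ argument to conclude. The main obstacle will be the second group: establishing the progress lemma precisely and threading the frozen clock $z$ through the extended-state semantics without letting it interact with the automaton's own clocks; this is where the non-Zeno and timelock-freeness hypotheses do their real work.
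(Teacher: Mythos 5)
Your plan is workable but it takes a genuinely different — and much heavier — route than the paper. The paper does not prove these equivalences directly against TCTL semantics at all: it starts from the \emph{already-established} MES encodings of the TCTL operators from the cited expressiveness paper (e.g.\ $\ctlAG{p}$ as $Y \mesnueq \tfoAsh{p \lgcand \mmAX{Y}}$ and $\ctlAF{p}$ as a relativized formula $\tfoAr{p}{p \lgcor \mmAX{Y}}$ conjoined with the time-divergence disjunct), and then shows the simplified right-hand sides are \emph{logically equivalent formulas} by a short chain of algebraic rewrites. The entire technical content is an auxiliary lemma exploiting that $p$ is a Boolean combination of atomic propositions and hence location-determined and time-insensitive: $p \equiv \tfoEsh{p} \equiv \tfoAsh{p}$, $p \lgcor \tfoAsh{\phi} \equiv \tfoAr{p}{p \lgcor \phi}$, and $p \lgcand \tfoEsh{\phi} \equiv \tfoEr{p}{p \lgcand \phi}$; distributivity and commutativity then finish each case, and the leads-to case is obtained by composing the $\ctlAG$ and $\ctlAF$ cases exactly as you suggest. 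Your Knaster--Tarski approach instead re-derives the TCTL-to-$L^{\mathit{rel}, \mathit{af}}_{\nu,\mu}$ translation from scratch, which buys self-containedness but forces you to carry the hardest burden — your ``progress lemma'' aligning least-fixpoint approximants with time-divergent runs under the non-Zeno hypothesis — which the paper sidesteps entirely by citation. You do correctly identify the one idea that is indispensable in either approach (time-insensitivity of $p$ across a sojourn, which is what licenses checking $p$ only at action points), so your first and third groups would go through; but be aware that your second group is essentially a proof of the correctness of the $\ctlAF$/$\ctlEG$ encodings themselves, not merely of their simplification, and you should say explicitly whether $\omega$ approximants suffice (they do, by finiteness of the region graph) rather than leaving the induction ordinal implicit.
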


The last operator is the ``leads to'' operator. Here we use the simplified $\ctlAF{q}$ but use the regular \ctlAG{p} formula. Also recall that the tool has operators to handle the subpaths with the freeze quantifiers.

To prove this operators, we will rely on some of the properties of formulas involving only atomic propositions. The proofs rely on the following property: if $p$ is true, then $\tfoAsh{p}$ is true. Also, for all previous times, $p$ is true. Hence, the semantics of the formula is equivalent regardless of whether a continuous or a pointwise semantics is used for $p$. As a result, we have the equivalences in the following lemma:

\begin{lemma}[Properties of atomic proposition formulas] Let $p$ be a combination of conjunctions and disjunctions of atomic propositions. Then:
\begin{align}
&p \equiv \tfoEsh{p} \equiv \tfoAsh{p} \\
&p \lgcor \tfoAsh{\phi}  \equiv \tfoAr{p}{p \lgcor \phi} \text{ for any formula $\phi$} \\
&p \lgcand \tfoEsh{\phi} \equiv  \tfoEr{p}{p \lgcand \phi}  \text{ for any formula $\phi$}
\end{align}
\label{lem:propequiv}
\end{lemma}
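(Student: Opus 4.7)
The plan hinges on a single structural observation about formulas generated by the grammar in Equation \ref{eq:gram}: their truth at an (extended) state $(l,\nu,\nu_f)$ depends only on the location $l$, never on the clock valuations. First I would establish this as a preliminary claim by a straightforward induction on the grammar: the base cases $p$, $\lgnot p$, $\lgtrue$, $\lgfalse$ follow from the semantics (they are determined by $M(l)$), and the inductive cases for $\lgcand$ and $\lgcor$ preserve location-determinedness. Since time advances do not change the location component, this yields the key corollary that for all $\delta \geq 0$, $(l,\nu,\nu_f) \models p$ if and only if $(l,\nu+\delta,\nu_f+\delta) \models p$.

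With location-invariance in hand, the first equivalence $p \equiv \tfoEsh{p} \equiv \tfoAsh{p}$ is immediate: if $p$ holds at $(l,\nu,\nu_f)$, then it holds after every time advance, so $\tfoAsh{p}$ holds (and a fortiori $\tfoEsh{p}$); conversely, both $\tfoAsh{p}$ and $\tfoEsh{p}$ imply $p$ holds at $(l,\nu,\nu_f)$ because any witnessing time-shifted state satisfies $p$ exactly when the original one does (taking $\delta=0$ suffices for $\tfoAsh{}$).

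For the second equivalence, $p \lgcor \tfoAsh{\phi} \equiv \tfoAr{p}{p \lgcor \phi}$, I would split on whether $p$ holds at $(l,\nu,\nu_f)$. If $p$ holds, then by location-invariance $p$ holds after every time advance, so the conclusion of $\tfoAr{p}{p \lgcor \phi}$ holds with the $p \lgcor \phi$ disjunct; the left-hand side holds trivially. If $p$ does not hold, then $p$ fails after every time advance, so the inner witness clause ``$\exists 0 \le \delta' < \delta:\, (l,\nu+\delta') \models p$'' in $\tfoAr{p}{p \lgcor \phi}$ is always vacuous, collapsing that formula to ``for all $\delta \geq 0$, $(l,\nu+\delta) \models \phi$,'' i.e.\ $\tfoAsh{\phi}$. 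This matches the right-hand disjunct of $p \lgcor \tfoAsh{\phi}$ exactly. An alternative is to invoke Lemma \ref{lem:deriv} and simplify $\tfoAsh{p \lgcor \phi} \lgcor \tfoEr{p \lgcor \phi}{p}$ using location-invariance; the case split above is cleaner.

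The third equivalence $p \lgcand \tfoEsh{\phi} \equiv \tfoEr{p}{p \lgcand \phi}$ is handled by unfolding the relativized definition and using invariance. From left to right, if $p$ holds and $\delta$ witnesses $\tfoEsh{\phi}$, then $(l,\nu+\delta) \models p \lgcand \phi$ by invariance, and the ``for all $0 \le \delta' < \delta:\, p$'' clause is satisfied (again by invariance, even for $\delta=0$ where it is vacuous). From right to left, the witnessing $\delta$ in $\tfoEr{p}{p \lgcand \phi}$ gives $(l,\nu+\delta) \models p \lgcand \phi$, which by invariance yields $p$ at $(l,\nu,\nu_f)$ and by the $\phi$-component yields $\tfoEsh{\phi}$. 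The only part that warrants care is the boundary $\delta=0$ in the ``until'' clause, but since $\delta' < \delta$ is strict, this is vacuously true and there is no anomaly; this is the one place I would expect a careless reader to slip, so I would flag it explicitly in the write-up.
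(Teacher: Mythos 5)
Your proposal is correct and follows the same route as the paper: the paper's proof is essentially a one-line appeal to the definitions of the $L^{rel}_{\nu,\mu}$ operators, preceded by the same key observation you make explicit (formulas built only from atomic propositions are determined by the location alone and hence invariant under time advance). You simply carry out in full the unfolding and case analysis that the paper leaves to the reader.
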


\begin{proof}[Proof of Lemma \ref{lem:propequiv}]
From the definitions of the $L^{rel}_{\nu,\mu}$ operators. For some insight into the second and third equivalences, try using $\phi = \lgfalse$ and $\phi = \lgtrue$. \qed
\end{proof}

\begin{proof}[Proof of Theorem \ref{lem:simpltctl}] 
Here we show $\ctlAG{p}$ and $\ctlAF{p}$. The proofs for $\ctlEG{p}$ and $\ctlEF{p}$ are similar, and the proof of the last equivalence follows from the proofs for $\ctlAG{p}$ and $\ctlAF{p}$.

Proof of $\ctlAG{p}$:
\begin{align*}
\ctlAG{p} \equiv  & Y \mesnueq \tfoAsh{p \lgcand \mmAX{Y}} \text{ (Original Formula)} \\
	& Y \mesnueq \tfoAsh{p} \lgcand \tfoAsh{\mmAX{Y}} \text{ (Distributivity $\forall$, $\wedge$)} \\
	& Y \mesnueq p \lgcand \tfoAsh{\mmAX{Y}} \text{ ($p \equiv \tfoAsh{p}$)}
\end{align*}

Proof of $\ctlAF{p}$:
\begin{align*}
\ctlAF{p} \equiv &  
	Y \mesmueq \tfoArB{p}{p \lgcor \mmAX{Y}} \lgcand 
	  \Bigl(\tfoEsh{\tfreeze{z}{\tfoAsh{z < 1}}} \lgcor \tfoEsh{p}\Bigr) \text{ (Original Formula)} \\
	  &Y \mesmueq (p \lgcor \tfoAsh{\mmAX{Y}}) \lgcand 
	  \Bigl(\tfoEsh{\tfreeze{z}{\tfoAsh{z < 1}}} \lgcor \tfoEsh{p}\Bigr) \text{ ($p \lgcor \tfoAsh{\phi}  \equiv \tfoAr{p}{p \lgcor \phi}$)} \\
	  & Y \mesmueq (p \lgcor \tfoAsh{\mmAX{Y}}) \lgcand 
	  \Bigl(\tfoEsh{\tfreeze{z}{\tfoAsh{z < 1}}} \lgcor p \Bigr) \text{ ($p \equiv \tfoEsh{p}$)} \\
	  & Y \mesmueq (p \lgcor \tfoAsh{\mmAX{Y}}) \lgcand 
	  (p \lgcor  \tfoEsh{\tfreeze{z}{\tfoAsh{z < 1}}}) \text{ (Commutativity $\vee$)} \\
	  &  Y \mesmueq p \lgcor \Bigl(\tfoAsh{\mmAX{Y}} \lgcand  \tfoEsh{\tfreeze{z}{\tfoAsh{z < 1}}}\Bigr) \text{ (Distributivity $\wedge$, $\vee$)}
\end{align*}  \qed
\end{proof}

\section{Experiment Data and Analysis}
\label{s:a:experiment}

\subsection{Methods: Evaluation Models}
\label{ss:a:models}

In our case study, we use four different models: Carrier Sense, Multiple Access with Collision Detection (CSMA); Fischer's Mutual Exclusion (FISCHER); Generalized Railroad Crossing (GRC); and Leader election (LEADER). These models were used in and taken from \cite{heitmeyer-the-generalized-1994,zhang-fast-generic-2005,zhang-fast-on-the-fly-2005}. Here is a brief description of them:

\begin{enumerate}
	\item \emph{Carrier Sense, Multiple Access with Collision Detection (CSMA).} There are $n$ processes sharing who one bus.  The bus can only send one message at a time.  At various times processes will try to transmit a message.  If the process detects that the bus is busy, then the process will wait a random amount of time before retrying. 
	\item \emph{Fischer's Mutual Exclusion (FISCHER)} This protocol involves $n$ processes vying for access to a critical section.  Each process asks for the critical section and then waits until it gets it, re-requesting for access if it is not granted it for a period of time. The critical section identifies which process currently has access to it.
	\item \emph{Generalized Railroad Crossing (GRC).} This protocol has $n$ trains, a gate and a controller. The trains cross a region that intersects a road, and the gate goes down to prevent cars from driving on the road when a train is passing through. When no train is nearby, the gate raises or remains up to allow cars to safely drive through.
	\item \emph{Leader election (LEADER).}  This protocol involves involves $n$ processes that are electing a leader amongst themselves.  To elect a leader, at each step one process asks another process to be its parent. In our model, the smaller-numbered process always becomes the parent. When finished, the process with no parent is the leader. 
\end{enumerate}

\subsection{Methods: Evaluation Specifications}
\label{ss:a:specs}

The specifications that are not supported by UPPAAL are in \emph{italics} and are marked with a $^{*}$.

The specifications checked on the CSMA protocol are:
\begin{itemize}
	\item \emph{\textbf{AS$^{*}$:}  At most one process is in a transmission state for less than 52 ($2\sigma$) units. (Valid)}
	\item \textbf{BS:} At any time, a third process can retry while two are already in transmission status. (Invalid)
	\item \textbf{AL:}  It is inevitable that all processes are waiting. (Valid)
	\item \textbf{BL:} It is inevitable that some process needs to retry transmitting a message. (Invalid)
	\item \textbf{M1:} It is always the case that if the first process needs to retry that it will inevitably transmit. (Invalid) 
	\item \textbf{M2:} It is always the case that if a bus experiences a collision that it will inevitably become idle. (Valid)
	\item \emph{\textbf{M3$^{*}$:} The bus is always idle until a process is active. (Invalid)} 
	\item \emph{\textbf{M4$^{*}$:} For all paths with an infinite number of actions, the bus is always idle until a process is active (Valid)}
\end{itemize}

The specifications checked on the FISCHER protocol are:
\begin{itemize}
	\item \textbf{AS:}  At any time, at most one process is in the critical section. (Valid)
	\item \textbf{BS:} At any moment, at most four processes in their waiting state at the same time. (Valid for four processes, Invalid for five or more processes)
	\item \textbf{AL:} It is inevitable that all processes are idle. (Valid)
	\item \textbf{BL:} It is inevitable that some process accesses the critical section. (Invalid)
		\item \textbf{M1:} It is always the case that if the first process is not idle, it will eventually access the critical section. (Invalid)
	\item \textbf{M2:} It is always the case that if the third process is not idle, it will eventually access the critical section. (Invalid)
	\item \emph{\textbf{M3$^{*}$:} It is possible for the first process to enter the critical section without waiting. (Invalid)}
	\item \emph{\textbf{M4$^{*}$:} After at most five action transitions, some process will enter the critical section. (Invalid)}
\end{itemize}

The specifications checked on the GRC protocol are:
\begin{itemize}
	\item \textbf{AS:} It is always the case that if at least one train (process) is in the track region, the gate is always down. (Valid)
	\item \textbf{BS:}  It is always the case that if the gate is raising then the controller (when one train is approaching or in) will not want to lower the gate. (Invalid)
	\item \textbf{AL:} It is inevitable that the gate is up. (Valid)
	\item \textbf{BL:} It is inevitable that the train is near the gate. (Invalid)
	\item \textbf{M1:} It is always the case that if the gate is down, then it will inevitably come up (Invalid).
		\item \emph{\textbf{M2$^{*}$:} It is always the case that if the gate is down, then it will inevitably come up after 30 seconds (Invalid).}
		\item \textbf{M3:} It is always the case that at most one train is in the region at one time (Invalid).
		\item \emph{\textbf{M4$^{*}$:} For all paths with an infinite number of actions, the gate is up until a train approaches (Valid).}
		\item \emph{\textbf{M4ap$^{*}$:} For all paths, the gate is up until a train approaches (Invalid).} 
\end{itemize}

The specifications checked on the LEADER protocol are:
\begin{itemize}
	\item \textbf{AS:} At any time, each process either has no parent or has a parent with a smaller process id (and thus the first process has no parent at all times). (Valid)
	\item \textbf{BS:} At any moment, at least three processes do not have parents.	(Invalid)
	\item \textbf{AL:}  It is inevitable that the first process is elected the leader. (Valid)
	\item \textbf{BL:} It is inevitable that the third processes' parent is the second process. (Invalid)
		\item \emph{\textbf{M1$^{*}$:} For all paths, a the second process cannot have a child until it has a parent. (Invalid)}
		\item \textbf{M2:} It is always the case that if the third process is assigned a parent (chosen to not be leader), then it will not be the leader. (Valid)
		\item \emph{\textbf{M3$^{*}$:} It is possible that it takes longer than 3 time units to elect a leader. (Valid)}
		\item \emph{\textbf{M4$^{*}$:} For all paths, in at most three votes, a leader is elected.  (Valid for four or fewer processes, invalid for five or more processes.)}  
\end{itemize}

\subsection{Data and Results}
\label{ss:a:data}

A scatter plot of the data in Table \ref{tab:data1} is given in Appendix \ref{ss:a:data}. In that figure, any example with O/M, TO or TOsm had its time set to 7200s (2 hours).

\begin{figure}[htb]
\includegraphics[width=0.7\linewidth]{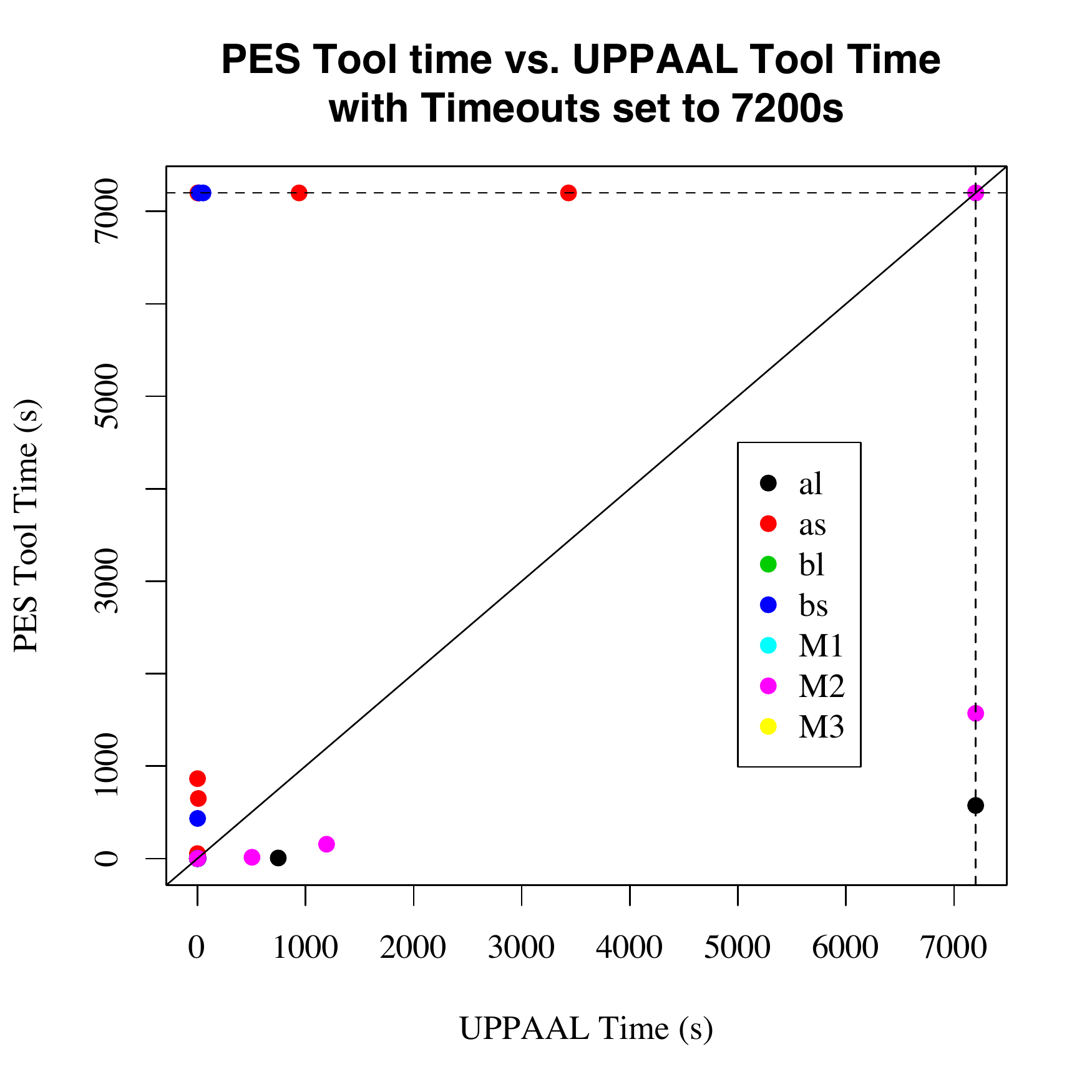}
\caption{Figure comparing the PES tool time performance with UPPAAL time performance. Points are colored by the specification category. All timed out (TO) examples or examples that ran out of memory (O/M) have their time set to 7200s, the value of the dashed lines.}
\label{fig:scplot}
\end{figure}

\end{document}